\newcommand{\algo}[1]{{\normalfont\scshape #1}}
\DeclareMathAlphabet{\mathcal}{OMS}{cmsy}{m}{n}
\definecolor{RowColor1}{HTML}{FFFFFF}
\definecolor{RowColor2}{HTML}{DDDDDD}
\definecolor{HeaderRowColor}{HTML}{FFFFFF}
\def\comment#1{}
\def\withcomments{
  \newcounter{mycommentcounter}
  \def\comment##1{\refstepcounter{mycommentcounter}%
   \ifhmode%
    \unskip%
    {\dimen1=\baselineskip \divide\dimen1 by 2 %
      \raise\dimen1\llap{\tiny -\themycommentcounter-}}\fi%
    \marginpar{\renewcommand{\baselinestretch}{0.8}%
      \footnotesize [\themycommentcounter]: \raggedright ##1}}
  }
\newenvironment{sketch}{\noindent \textit{Sketch of proof.}\,\,}{\hfill\qed\smallskip}
\newcommand{\rephrase}[3]{\noindent\textbf{#1~#2.~}\emph{#3}}
\newcommand{\T}{\ensuremath{\mathcal T}}
\newcommand{\C}{\ensuremath{\Omega}}
\newcommand{\Ea}{\ensuremath{E_\alpha}}
\newcommand{\caa}{\ensuremath{c_\alpha}}
\newcommand{\Ga}{\ensuremath{G_\alpha}}
\newcommand{\Va}{\ensuremath{V_\alpha}}
\newcommand{\SC}{\ensuremath{\mathcal {SC}}}
\newcommand{\Ts}{\ensuremath{T_\ast}}
\newcommand{\Vs}{\ensuremath{V_\ast}}
\newcommand{\Es}{\ensuremath{E_\ast}}
\newcommand{\css}{\ensuremath{c_\ast}}
\title{Hierarchies of Predominantly Connected Communities}
\author{
	Michael Hamann,
	Tanja Hartmann\inst{},
	Dorothea Wagner
}
\institute{
  Department of Informatics, Karlsruhe Institute of Technology (KIT)
\\ \url{michael@content-space.de,\{t.hartmann, dorothea.wagner\}@kit.edu}
}
\begin{document}
\setcounter{totalnumber}{8}
\setcounter{topnumber}{8}
\setcounter{bottomnumber}{8}
\renewcommand{\textfraction}{0.001}
\renewcommand{\topfraction}{0.999}
\renewcommand{\bottomfraction}{0.999}
\maketitle
%
\vspace{-0ex}
\begin{abstract}
We consider communities whose vertices are predominantly connected, i.e.,
the vertices in each community are stronger connected to other community members of the same community than to vertices outside the community.
Flake et al. introduced a hierarchical clustering algorithm that finds such predominantly connected communities of different coarseness depending on an input parameter.
We present a simple and efficient method for constructing a clustering hierarchy according to Flake et al.
that supersedes the necessity of choosing feasible parameter values and guarantees the completeness of the resulting hierarchy, i.e., the hierarchy contains all clusterings that can be constructed by the original algorithm for any parameter value.
However, predominantly connected communities are not organized in a single hierarchy.
Thus, we develop a framework that, after precomputing at most $2(n-1)$ maximum flows,
admits 
a linear time construction of a clustering~$\C(S)$ of predominantly connected communities that contains a given community $S$ and
is maximum in the sense that any further clustering of predominantly connected communities that also contains~$S$ is hierarchically nested in $\C(S)$.
We further generalize this construction yielding a clustering with similar properties for~$k$ given communities in $O(kn)$ time.
This admits the analysis of a network's structure with respect to various communities in different hierarchies.
\end{abstract}
\vspace*{-5ex}
%
\section{Introduction}
\vspace*{-1ex}
There exist many 
different approaches to find communities in networks,
many of which are inspired by graph clustering techniques originally developed 
for special applications in fields like physics and biology.
Graph clustering is based on the assumption that the given network is a compound of
dense subgraphs, so called \emph{clusters} or \emph{communities}, 
that are only sparsely connected among each other, 
and aims at finding a \emph{clustering}
that represents these subgraphs.
However, evaluating the quality of a found clustering is often difficult,
since there are no generally applicable criteria for good clusterings
and clustering properties that are well 
interpretable in the network's context are rarely guaranteed. 
In this work we thus focus on predominantly connected communities in 
undirected edge-weighted graphs. Predominant connectivity is 
easy to interpret and guarantees that only vertices whose membership to a community is clearly 
indicated by the networks's structure are assigned to a community.
The latter is in particular desired 
if the analysis of the community structure is meant to support costly or risky decisions.

\vspace{-2ex}
\subsubsection{Contribution and Outline.} We discuss different types 
of predominantly connected communities
(cp.~Table~\ref{tab:tighttypes} for an overview) in Section~\ref{sec:preCom} 
and argue that considering 
source communities (SCs) in networks is reasonable. 
We further give a characterization of SCs and 
introduce basic nesting properties.
In Section~\ref{sec:cutClus}, we review the 
cut clustering algorithm by Flake et al.~\cite{ftt-gcmct-04}, 
which takes an input parameter $\alpha$
and decomposes a given network into SCs, 
each of which providing an intra-cluster density of at least $\alpha$ and 
an inter-cluster sparsity of at most $\alpha$.
At the same time, $\alpha$ controls the 
coarseness of the resulting clustering such that for
varying values the algorithm returns a clustering hierarchy.
Flake at al.\ refer to Gallo et al.~\cite{ggt-fpmfaa-89} for the
question how to choose~$\alpha$ such that all possible hierarchy levels are found.
However, they give no further description how to extend the approach of Gallo et al.,
which finds all breakpoints of $\alpha$ for a single parametric flow, 
to a fast construction of a complete hierarchy. 
They just propose a binary-search approach to find good values for~$\alpha$.
We introduce a parametric-search approach that guarantees the completeness of the resulting hierarchy 
and exceeds the running time of a
binary search-based approach, whose running time strongly depends on the
discretization of the parameter range.
\renewcommand{\arraystretch}{1.2}
\setlength{\tabcolsep}{1mm}
\begin{table}[t]
	\centering
	\caption{Overview of different types of predominantly connected communities. The columns to the right describe the relations between the types in terms of inclusion.}
	\vspace{-1ex}
	\begin{tabular}{|c|c|c||c|c|c|}
 		\hline
		\multicolumn{3}{|c||}{\textrm{A subgraph } $S \subseteq V$ \textrm{ is a} }& WC & ES & SC\\ 
		\hline
		\textrm{WC} & $\forall u\in S$ & $c(\{u\},S\setminus \{u\}) \gneqq c(\{u\}, V\setminus S)$  & x & x & \\
		\cline{1-6}
		\textrm{ES}  &   $\forall U\subset S$ & $c(U,S \setminus U) \gneqq c(U, V \setminus S)$ &  & x &  \\
		\cline{1-6}
		\textrm{SC} & $\exists s\in S: \forall U\subset S$, $s\notin U$ & $c(U,S\setminus U) \gneqq c(U, V\setminus S)$  &  & x &  x\\
		\hline
	\end{tabular}
	\label{tab:tighttypes}
\end{table}

\vspace{0ex}
Experimental evaluations further showed 
that the cut clustering algorithm finds meaningful clusters in real-world instances~\cite{ftt-gcmct-04},
but yet,
it often happens that even in a complete hierarchy non-singleton clusters are only found for a subgraph 
of the initial network, while the remaining vertices stay unclustered
even on the coarsest non-trivial hierarchy level~\cite{ftt-gcmct-04,hhw-chcca-13}.
Motivated by this observation, in Section~\ref{sec:framework}, we develop a framework 
that is based on a set $M(G)$ of $n \leq |M(G)| \leq 2(n-1)$ maximal SCs in the graph~$G$, 
i.e., each further SC is nested in a SC in $M(G)$,
and is represented by a special cut tree, which can be constructed by at 
most $2(n-1)$ max-flow computations.
After computing $M(G)$ in a preprocessing step,
the framework efficiently answers the following queries: 
(i) Given an arbitrary SC $S$, 
what does a clustering $\C(S)$ look like that consists of~$S$ and further SCs such that any SC not intersecting with $S$ is nested in a cluster of $\C(S)$? In particular, $\C(S)$ is maximum in the sense that any clustering of SCs that contains~$S$ is hierarchically nested in $\C(S)$.
We show that $\C(S)$ can be determined in linear time.
(ii) Given $k$ disjoint SCs, which is the maximal clustering $\C(S_1,\dots, S_k)$ that contains the given SCs, is nested in each $\C(S_i)$, $i = 1,\dots, k$, and
guarantees that any clustering of SCs that also contains the given ones is nested in $\C(S_1,\dots, S_k)$?
Computing $\C(S_1,\dots, S_k)$ takes $O(kn)$ time.
These queries allow to further examine the
community structure of a given network, beyond the complete clustering hierarchy according to Flake et al.
We exemplarily apply both queries to a small real world network, 
thereby finding a new clustering beyond the hierarchy 
that contains all non-singleton clusters of the best clustering in the 
hierarchy but far less singletons.

\vspace*{-1.5ex}
\subsubsection{Preliminaries.}
Throughout this work we consider an undirected, weighted graph $G = (V,E,c)$ with vertices $V$, edges $E$ and a positive edge cost function~$c$, writing $c(u,v)$ as a shorthand for $c(\{u,v\})$ with 
$\{u,v\} \in E$. Whenever we consider the degree $\deg(v)$ of $v\in V$, we implicitly mean the sum of all edge costs incident to~$v$.
A \emph{cut} in $G$ is a partition of $V$ into two \emph{cut sides} $S$ and $V\setminus S$. The cost $c(S,V\setminus S)$ of a cut is the sum of the costs of all edges \emph{crossing} the cut, i.e., edges $\{u,v\}$ with $u\in S$, $v \in V\setminus S$. For two disjoint sets $A,B\subseteq V$ we define the cost $c(A,B)$ analogously. 
Two cuts are \emph{non-crossing} if their cut sides are pairwise nested or disjoint. 
Two sets $S,T \subset V$ are \emph{separated} by a cut if they lie on different cut sides.
A minimum $S$-$T$-cut is a cut that separates $S$ and $T$ and is the cheapest cut among all cuts separating these sets. We call a cut a \emph{minimum separating cut} if there exists an arbitrary pair $\{S,T\}$ for which it is a minimum $S$-$T$-cut.  We identify singleton sets with the contained vertex without further notice.
We further denote the \emph{connectivity} of $\{S,T\}\subseteq 2^V$ by $\lambda(S,T)$, describing the cost of a minimum $S$-$T$-cut.
A \emph{clustering} $\C$ of $G$ is a partition of $V$ into subsets $C^1, \dots , C^k$, which define vertex-induced subgraphs, called \emph{clusters}.
A cluster is \emph{trivial} if it corresponds to a connected component. A vertex that forms a singleton cluster although it is no singleton in $G$, is \emph{unclustered}. 
A clustering is \emph{trivial} if it consists of trivial clusters or if $k=n$.
A \emph{hierarchy of clusterings} is a sequence $\C_1 \leq \dots \leq \C_r$ such that $\C_i \leq \C_j$ implies that each cluster in $\C_i$ is a subset of a cluster in $\C_j$. We say $\C_i \leq \C_{j}$ are \emph{hierarchically nested}.
A clustering $\C$ is \emph{maximal} with respect to a property $\mathcal P$ if there is no other clustering $\C'$ with property $\mathcal P$ and $\C \leq \C'$.

\vspace*{-1ex}
\section{Predominantly Connected Communities}\label{sec:preCom}
\vspace*{-1ex}
In the context of large web-based graphs, Flake et al.~\cite{flgc-soiwc-02} 
introduce \emph{web communities} (WCs)
in terms of predominant connectivity of single vertices:
A set
 $S\subseteq V$ is a web community
if $c(\{u\},S\setminus \{u\}) \gneqq c(\{u\}, V\setminus S)$ for all $u \in S$.
Web communities are
not necessarily connected (cp.~Fig.~\ref{fig:webCom})
and decomposing a graph into $k$ web communities is NP-complete~\cite{ftt-gcmct-04}.
Extending
\begin{wrapfigure}[6]{r}{0.25\textwidth}
\vspace{-4.5ex}
\centering
\includegraphics[width = 2.5cm]{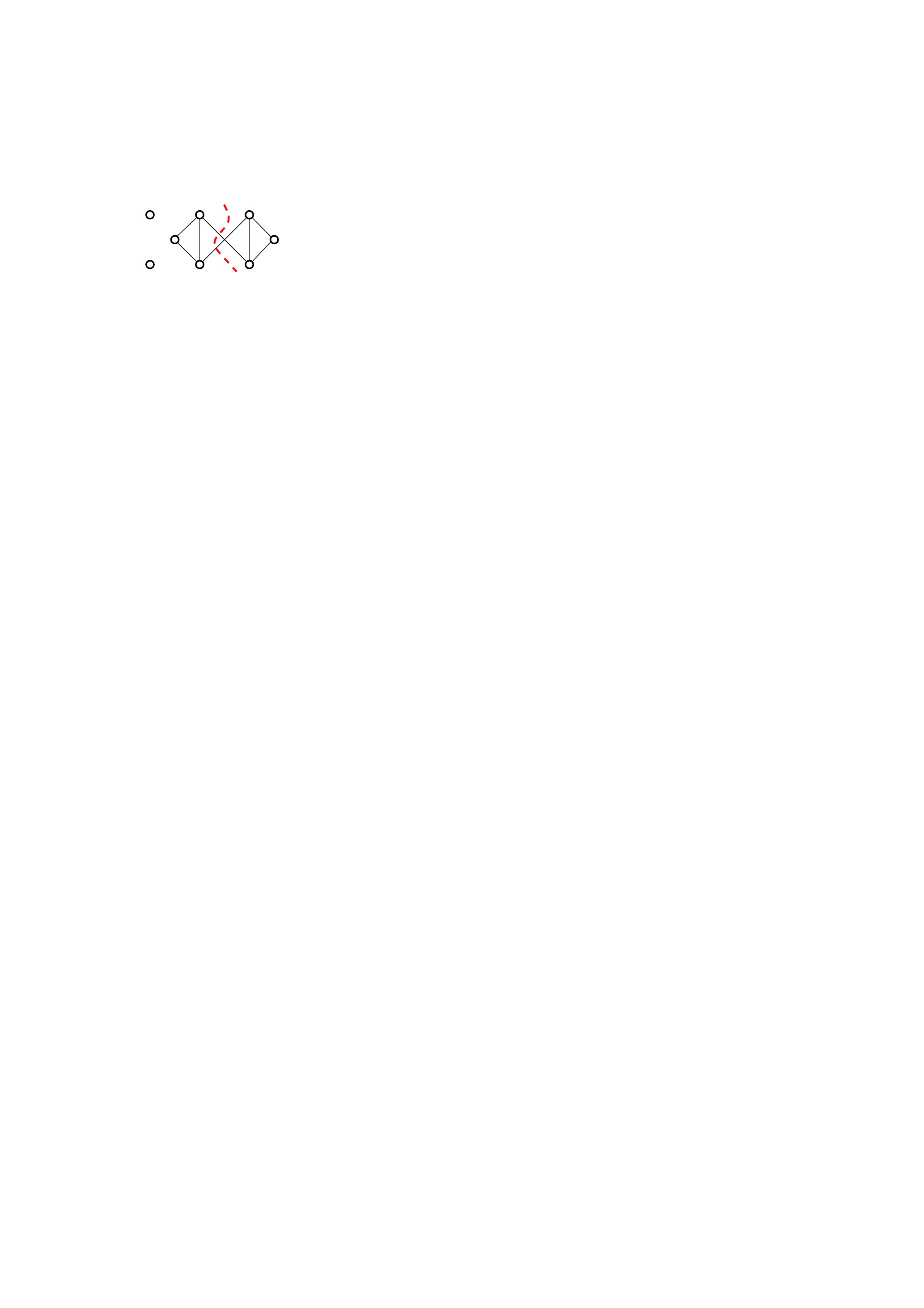}
\vspace{-1.5ex}
\caption{\small Unconnected web community (left) }
\label{fig:webCom}
\end{wrapfigure}
the predominant connectivity from vertices to arbitrary subsets
yields \emph{extreme sets} (ESs), which satisfy a stricter property 
that guarantees connectivity and
gives a good intuition why the vertices in ESs belong together:
A set $S\subseteq V$ is an extreme set
if $c(U,S\setminus U) \gneqq c(U, V\setminus S)$ for all $U \subsetneq S$.
The extreme sets in a graph can be computed in $O(nm + n^2 \log n)$ time
with the help of maximum adjacency orderings~\cite{n-gancp-04}. 
They form a subset of the \emph{maximal components} of a graph,
which subsume vertices that are not separated by cuts 
cheaper than a certain lower bound. Maximal components
are either nested or disjoint and can be deduced from a cut tree, 
whose construction needs $n-1$ maximum flow computations~\cite{gh-mtnf-61}.
They are used in the context of image segmentation by Wu and Leahy~\cite{wl-aogta-93}.

In, for example, social networks, we are also interested in communities 
that surround a designated vertex, for instance a central person.
Complying with this view, 
\emph{source communities} (SCs) describe vertex sets where each subset
that does not contain a designated vertex is predominantly connected to the remainder of the group:
A set $S\subseteq V$ is a SC with \emph{source} $s\in S$
if $c(U,S\setminus U) \gneqq c(U, V\setminus S)$ for all $U \subsetneq S\setminus \{s\}$.
The members of a SC can be interpreted as \emph{followers} of the source 
in that sense that each subgroup feels more attracted by the source (and  
other group members) than by the vertices outside the group.
The predominant connectivity of SCs implements a close relation 
to minimum separating cuts. 
In fact, SCs are characterized as follows.
\newcommand{\lemCharSC}{
A set $S \subset V$ is a SC  of $s \in S$ iff 
there is $T \subseteq V\setminus S$ such that $(S,V\setminus S)$ is 
the minimum $s$-$T$-cut in $G$ that minimizes the number of vertices on the side containing~$s$.
}
\begin{lemma}
  \label{lem:charSC}
  \lemCharSC
\end{lemma}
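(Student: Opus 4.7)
My plan is to prove both directions via a single cost-accounting identity. For any non-empty $U\subseteq S\setminus\{s\}$ and any $T\subseteq V\setminus S$, the pair $(S\setminus U,V\setminus(S\setminus U))$ is again an $s$-$T$-cut because $s\in S\setminus U$ and $T\subseteq (V\setminus S)\cup U=V\setminus(S\setminus U)$. Decomposing crossing edges via $V\setminus(S\setminus U)=U\cup(V\setminus S)$ and $S=(S\setminus U)\cup U$ gives $c(S\setminus U,V\setminus(S\setminus U))-c(S,V\setminus S)=c(U,S\setminus U)-c(U,V\setminus S)$, so the SC inequality for $U$ is precisely the statement that peeling $U$ off the $s$-side strictly increases the cut cost. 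Both implications then come from turning this observation around.

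\textbf{Direction ($\Leftarrow$).} Assuming $T\subseteq V\setminus S$ witnesses that $(S,V\setminus S)$ is the minimum $s$-$T$-cut with the fewest vertices on the $s$-side, I would pick an arbitrary non-empty $U\subseteq S\setminus\{s\}$ and consider $(S\setminus U,V\setminus(S\setminus U))$. This is a valid $s$-$T$-cut with strictly fewer vertices on the $s$-side. The tie-breaking rule forbids it from having the same cost as $(S,V\setminus S)$, and the minimum-cut hypothesis forbids a smaller cost, so its cost must strictly exceed $c(S,V\setminus S)$; by the identity this is exactly the SC inequality for $U$.

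\textbf{Direction ($\Rightarrow$).} For the other direction I would take $T:=V\setminus S$, which is non-empty because $S\subsetneq V$. Every $s$-$T$-cut then has $s$-side $S'$ with $s\in S'\subseteq S$ and can therefore be written as $(S\setminus U,V\setminus(S\setminus U))$ with $U=S\setminus S'\subseteq S\setminus\{s\}$. For $U=\emptyset$ this is $(S,V\setminus S)$, and for non-empty $U$ the SC property combined with the identity forces the cut to be strictly more expensive. Hence $(S,V\setminus S)$ is the unique minimum $s$-$T$-cut and, trivially, also the one minimizing the number of vertices on the $s$-side.

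\textbf{Main obstacle.} The algebra is routine once one spots that peeling a non-source subset $U$ off the $s$-side changes the cut cost by exactly $c(U,V\setminus S)-c(U,S\setminus U)$. The only delicate point is the strictness of the SC inequality in direction ($\Leftarrow$): it cannot be derived from the minimum-cut hypothesis alone but requires its combination with the tie-breaking rule on $s$-side size. Without the tie-breaker the argument would yield only a non-strict version of the SC inequality, so the formulation of the lemma is essentially tight.
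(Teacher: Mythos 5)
Your proof is correct and follows essentially the same route as the paper: take $T=V\setminus S$ for ($\Rightarrow$), observe that peeling $U$ off the $s$-side changes the cut cost by $c(U,S\setminus U)-c(U,V\setminus S)$, and use the minimality of the $s$-side to upgrade the non-strict inequality to the strict SC inequality in ($\Leftarrow$). Making the cost identity explicit is a nice touch of bookkeeping, but the underlying argument is the paper's.
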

\begin{proof}
($\Rightarrow$): If $S$ is a SC of $s$, $(S,V\setminus S)$ is a 
minimum $s$-$T$-cut for $T = V\setminus S$. Otherwise, a cheaper
$s$-$T$-cut would split $S$ into $U$ and $S\setminus U \ni s$ with
$c(U, S\setminus U) < c(U, V\setminus S)$, which is a contradiction.
The cut $(S,V\setminus S)$ further minimizes the number of vertices 
on the side containing $s$, since otherwise a "smaller" cut would induce a
set $U \subsetneq S$, $s\notin U$, with
$c(U, S\setminus U) = c(U, V\setminus S)$.

($\Leftarrow$): If $(S,V\setminus S)$ is a minimum $s$-$T$-cut with $T\subseteq V\setminus S$,
then it is $c(U, S\setminus U) \leq c(U, V\setminus S)$ for all $U\subset S\setminus \{s\}$. 
Otherwise, $(S\setminus U, V\setminus (S\setminus U))$, which also separates~$s$ and~$T$, 
would be a cheaper $s$-$T$-cut.
If $(S,V\setminus S)$ further minimizes the number of vertices 
on the side containing $s$, it is $c(U, S\setminus U) < c(U, V\setminus S)$ for all $U\subset S\setminus \{s\}$. 
Otherwise, $(S\setminus U, V\setminus (S\setminus U))$, would be a minimum $s$-$T$-cut with a smaller side containing $s$.
\qed
\end{proof}

\vspace*{-0.2ex}
\begin{wrapfigure}[10]{r}{0.21\textwidth}
\vspace{-6ex}
\centering
\includegraphics[width = 2.3cm]{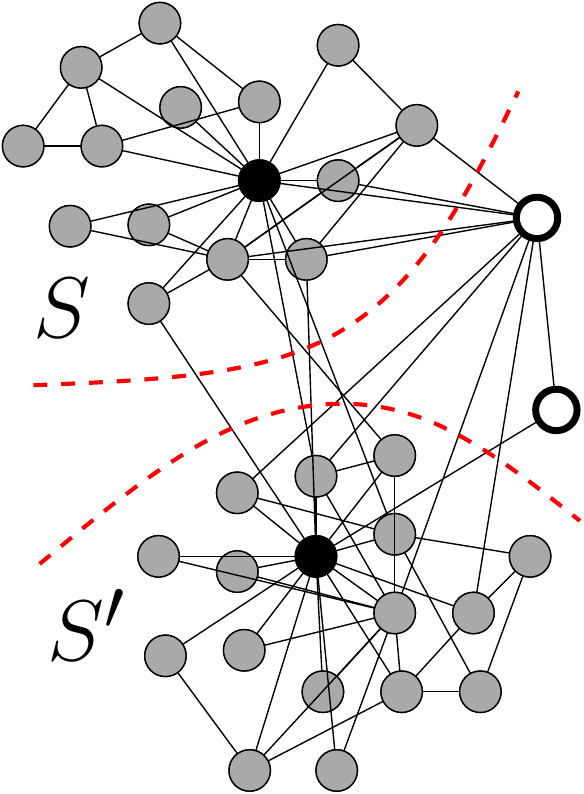}
\label{fig:zach}
\vspace{-1.5ex}
\caption{\small Indecisive vertices (white). }
\end{wrapfigure}
Based on this characterization, we introduce some further notations and 
two basic lemmas on nesting properties of SCs, which we will mainly use in Section~\ref{sec:framework}.
Note that a minimum $s$-$T$-cut in~$G$ must not be unique, 
however, the minimum $s$-$T$-cut that minimizes the number of vertices on the side containing~$s$ is unique.
We call such a cut, which induces a SC~$S$, a \emph{community cut}, 
$S$ the SC of $s$ \emph{with respect to~$T$} and $T$ the \emph{opponent} of $s$.
Hence, $\SC\colon V\times 2^V \rightarrow 2^V$, $\SC(s,T) \mapsto$ \{the SC of~$s$ with respect to~$T$\}
is well defined providing $\SC(s,T)$ as future notation.
The corresponding maximum flow between $s$ and~$T$ also induces an \emph{opposite} SC $S' := \SC(T,s)$,
if we consider~$T$ as a compound node.
If the community cut is the only minimum $s$-$T$-cut, it is $S' = V\setminus S$. 
Otherwise, $X:= V\setminus (S\cup S') \not= \emptyset$ and the vertices in $X$  are 
neither predominantly connected within $S\cup X$ nor within $S'\cup X$, i.e.,
$c(U,S\cup X) \leq c(U, V\setminus (S\cup X))$ for all $U \subseteq X$ (analogously for $S'$).
In, for example, a social network this can be interpreted as follows.
Whenever $s$ and the group $T$ become rivals, 
the network decomposes into followers of $s$ (in $S$), followers of $T$ (in $S'$) 
and possibly some \emph{indecisive} individuals in $V\setminus (S\cup S')$. 
Figure~\ref{fig:zach} exemplarily shows two indecisive vertices in the (unweighted\footnote{Zachary considers the weighted network and therein the minimum cut that separates the two central vertices of highest degree (black). In the weighted network this cut is unique.}) karate club network 
gathered by Zachary~\cite{z-ifmcf-77}. 
Note that a SC can have several sources, and a vertex can have different SCs w.r.t.\ different
opponents.
The SCs of a vertex are partially nested as stated in Lemma~\ref{lem:comNest}, 
which is a special case of (2i) of  Lemma~\ref{lem:intersecBehavior} summarizing the intersection behavior of arbitrary SCs.
See Figure~\ref{fig:intersecBehavior} for illustration and an example of neither nested nor disjoint~SCs.
\newcommand{\lemComNest}{
Let $S$ denote a SC of $s$ and $T\cap S = \emptyset$. 
Then $S\subseteq \SC(s,T)$.
}
\begin{lemma}
  \label{lem:comNest}
  \lemComNest
\end{lemma}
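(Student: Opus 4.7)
The plan is to argue by contradiction, using Lemma~\ref{lem:charSC} to treat $A := \SC(s,T)$ as the minimum $s$-$T$-cut with smallest $s$-side; in particular $s \in A$ and $T \subseteq V\setminus A$. Suppose $S \not\subseteq A$ and set $U := S \setminus A$. Then $U$ is a nonempty subset of $S\setminus\{s\}$ (since $s\in A\cap S$) that is disjoint from $T$ (since $T \cap S = \emptyset$), so the SC condition at $S$ applied to $U$ yields
\[
c(U, S\setminus U) \;>\; c(U, V\setminus S).
\]
This strict inequality is the only structural input I would extract from $S$.

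The rest is a standard uncrossing step. Because $T$ avoids both $A$ and $S$, the set $A\cup U = A\cup S$ still contains $s$ and avoids $T$, so $(A\cup U, V\setminus(A\cup U))$ is again a valid $s$-$T$-cut. I would then compute the cost difference $c(A\cup U, V\setminus(A\cup U)) - c(A, V\setminus A)$ by partitioning $V$ into the four blocks $A\cap S$, $A\setminus S$, $U$, and $V\setminus(A\cup S)$, and using $A\cap S = S\setminus U$ to rewrite the edges from $U$ into $A$ as edges inside $S$. The arithmetic collapses to
\[
c(U, V\setminus S) - c(U, S\setminus U) - 2\,c(U, A\setminus S),
\]
which is strictly negative by the displayed SC inequality. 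This exhibits an $s$-$T$-cut strictly cheaper than $(A, V\setminus A)$, contradicting the minimality of $A$.

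The main obstacle is simply to keep the edge accounting honest when enlarging $A$ to $A\cup U$; once the four-block partition is written down, the SC property closes the argument in one line. A submodular uncrossing of $A$ and $S$ is a tempting shortcut, but submodularity together with $c(A\cup S),\,c(A\cap S) \geq c(A)$ only yields $c(S) \geq c(A)$, and without an a priori equality $c(S) = c(A)$ one still has to invoke the strict SC inequality to force a contradiction. I would therefore present the direct edge-counting computation above rather than route through submodularity.
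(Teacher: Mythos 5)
Your proof is correct, but it takes a genuinely different route from the paper's. The paper does not argue directly: it obtains Lemma~\ref{lem:comNest} as the special case $s_1=s_2$ of the general intersection lemma (Lemma~\ref{lem:intersecBehavior}(2i)), whose proof works in the opposite direction. There, with $U:=S_1\setminus S_2$, one first uncrosses on the $S_2$ side to get $c(U,S_2)\le c(U,V\setminus(S_1\cup S_2))$ from the cost-minimality of the cut inducing $S_2$, and then shows that $(S_1\setminus U,\,V\setminus(S_1\setminus U))$ is no more expensive than $(S_1,V\setminus S_1)$, contradicting the fact that the community cut inducing $S_1$ minimizes the number of vertices on the source side. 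You instead enlarge $A=\SC(s,T)$ to $A\cup S$ and contradict plain cost-minimality of the minimum $s$-$T$-cut, feeding in the strict predominant-connectivity inequality of $S$ at $U=S\setminus A$ (which is admissible: $s\in A\cap S$ so $s\notin U$, and $U\cap T=\emptyset$). This buys a shorter, self-contained argument that never invokes the ``smallest source side'' tie-breaking property of community cuts, only that $(A,V\setminus A)$ is a minimum $s$-$T$-cut; the paper's detour through Lemma~\ref{lem:intersecBehavior} buys the full classification of how two arbitrary SCs intersect, which is needed elsewhere. Your four-block bookkeeping is right, the resulting expression $c(U,V\setminus S)-c(U,S\setminus U)-2\,c(U,A\setminus S)$ is indeed strictly negative by the SC inequality, and your remark that submodular uncrossing alone only yields $c(S,V\setminus S)\ge\lambda(s,T)$ and cannot close the argument without the strict inequality is accurate.
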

\begin{proof}
Since Case (2i) of Lemma~\ref{lem:intersecBehavior} admits $s_1 = s_2$, 
this lemma directly follows with $s \equiv s_1$, $S \equiv S_1$, $T\equiv T_2$ and $S' \equiv S_2$.
\qed
\end{proof}
As a consequence, each SC $S\not= V$ is nested in a SC $S'$ that is a SC w.r.t.\ a single vertex~$t$,
while any SC $\bar S$ with $S'\subsetneqq \bar S$ contains~$t$.
In this sense, SCs w.r.t.\ single vertices are \emph{maximal}.
We denote the set of maximal SCs in $G$ by $M(G)$.
\newcommand{\intersecBehavior}{
Consider $S_1 := \SC(s_1,T_1)$ and $S_2 := \SC(s_2,T_2)$.
\begin{compactenum}[(1)]
\item If $\{s_1,s_2\}\cap (S_1\cap S_2) = \emptyset$, then $S_1 \cap S_2 = \emptyset$.
\item If $T_2\cap S_1 = \emptyset$ and $s_1 \in S_2$, then $S_1\subseteq S_2$ (i).
If further $T_1 \cap S_2 = \emptyset$ and $s_2 \in S_1$, then $S_1 = S_2$ (ii).
\item Otherwise, $S_1$ and $S_2$ are neither nested nor disjoint.
\end{compactenum}
}
\begin{lemma}
  \label{lem:intersecBehavior}
  \intersecBehavior
\end{lemma}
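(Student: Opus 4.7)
The plan is to reduce the lemma entirely to cut-theoretic uncrossing via Lemma~\ref{lem:charSC}, which lets me treat each $S_i$ as the unique minimum $s_i$-$T_i$-cut side minimizing $|S_i|$, with $T_i \subseteq V \setminus S_i$. Beyond this I will only need the standard submodularity and posi-modularity of the graph cut function $f(X) := c(X, V \setminus X)$, namely $f(A)+f(B) \geq f(A\cap B)+f(A\cup B)$ and $f(A)+f(B) \geq f(A\setminus B)+f(B\setminus A)$. In each of the three cases, I uncross $S_1$ and $S_2$, check that the resulting sets still separate the appropriate source/opponent pairs, and then let Lemma~\ref{lem:charSC} force either the desired conclusion or a reduction to another case.

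For case~(1) I would argue by contradiction, assuming $S_1 \cap S_2 \neq \emptyset$. Since the hypothesis puts the sources outside the intersection, $s_1 \in S_1 \setminus S_2$ and $s_2 \in S_2 \setminus S_1$; and $T_i \cap S_i = \emptyset$ guarantees that the opponents remain on the correct side of $(S_i \setminus S_{3-i},\,V\setminus(S_i \setminus S_{3-i}))$. Minimality of $(S_i, V\setminus S_i)$ therefore gives $f(S_i \setminus S_{3-i}) \geq f(S_i)$, and combining these with posi-modularity collapses everything to equality. Then $(S_1 \setminus S_2, \cdot)$ is a minimum $s_1$-$T_1$-cut whose $s_1$-side is strictly smaller than $S_1$, contradicting the uniqueness part of Lemma~\ref{lem:charSC}.

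Case~(2i) is the mirror argument with submodularity: the hypotheses $s_1 \in S_2$ and $T_2 \cap S_1 = \emptyset$, together with $T_i \cap S_i = \emptyset$, certify that $(S_1 \cap S_2, \cdot)$ separates $s_1$ from $T_1$ and $(S_1 \cup S_2, \cdot)$ separates $s_2$ from $T_2$. Minimality yields $f(S_1 \cap S_2) \geq f(S_1)$ and $f(S_1 \cup S_2) \geq f(S_2)$, and together with submodularity these must be equalities. Hence $(S_1 \cap S_2, \cdot)$ is a minimum $s_1$-$T_1$-cut, and if $S_1 \not\subseteq S_2$ its $s_1$-side is strictly smaller than $S_1$, again contradicting Lemma~\ref{lem:charSC}. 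Part~(ii) is (i) combined with its symmetric counterpart applied to $(s_2, T_2)$. For case~(3) I would argue contrapositively: $S_1 \cap S_2 = \emptyset$ trivially satisfies~(1); $S_1 \subseteq S_2$ forces $s_1 \in S_2$ and $T_2 \subseteq V\setminus S_2 \subseteq V\setminus S_1$, which is the hypothesis of~(2i); the reverse inclusion is symmetric; and $S_1 = S_2$ falls under~(2ii).

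The main obstacle is purely bookkeeping: for each of the four uncrossed sets $S_1\cap S_2$, $S_1\cup S_2$, $S_1\setminus S_2$, $S_2\setminus S_1$, I have to verify which source it contains and which opponent it avoids before I am allowed to invoke $s_i$-$T_i$-minimality. The real work is done by the uniqueness clause of Lemma~\ref{lem:charSC}, since a second minimum cut with a strictly smaller source side is exactly what that clause forbids.
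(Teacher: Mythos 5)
Your proof is correct and reaches the same contradiction as the paper's in every case --- a competing minimum $s_i$-$T_i$-cut whose source side is strictly smaller than $S_i$, which the uniqueness clause of Lemma~\ref{lem:charSC} forbids --- but it gets there by a tidier route. Where you invoke submodularity and posimodularity of $f(X)=c(X,V\setminus X)$ as black boxes, the paper derives the same inequalities by decomposing the costs of the two competing cuts term by term into edge contributions between the pieces $S_1\cap S_2$, $S_1\setminus S_2$, $S_2\setminus S_1$ and $V\setminus(S_1\cup S_2)$; your bookkeeping of which uncrossed set contains which source and avoids which opponent is exactly the paper's case analysis in disguise. Two genuine differences are worth noting. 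First, in case~(1) the paper exploits the strict inequality in the SC definition of $S_1$ to exhibit a \emph{strictly} cheaper $s_2$-$T_2$-cut, whereas you obtain an equally cheap cut and then contradict the size-minimality clause; both are valid because that clause encodes precisely the strictness. Second, for~(2ii) you simply apply~(2i) twice with the roles of $(S_1,s_1,T_1)$ and $(S_2,s_2,T_2)$ exchanged, while the paper argues directly that $\lambda(s_1,T_1)=\lambda(s_2,T_2)$ and that $(S_1,V\setminus S_1)$ is then a minimum $s_2$-$T_2$-cut with $|S_1|\le|S_2|$; your version is shorter and reuses work already done. For~(3) your contrapositive reading (disjoint forces the hypothesis of~(1), nested forces the hypothesis of~(2i) in one of the two orientations) agrees with the paper's, which likewise treats the lemma as symmetric in the two communities. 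Overall the abstraction buys brevity and reusability; the paper's explicit decomposition buys self-containedness for a reader who does not want to take posimodularity on faith.
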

\begin{proof}
Proof of (1):\\
First recall that $S_1\cap T_1 = S_2\cap T_2 = \emptyset$.
Now suppose $S_1\cap S_2 \not= \emptyset$ (cp.Figure~\ref{fig:Case_i}).
Since $U:= (S_1 \cap S_2) \subseteq S_1$ with $s_1\notin U$ it is
$\left. \deg(U) \right|_{S_1} \gneqq \left. \deg(U) \right|_{(V\setminus S_1)\cup U}$. This is equivalent to
$c(U,V\setminus S_1) < c(U, S_1\setminus U)$ and it follows, since $(S_2\setminus U) \subseteq (V\setminus S_1)$,
\begin{eqnarray}
c(U,S_2\setminus U) &\leq& c(U,V\setminus S_1) < c(U, S_1\setminus U). \label{eq:help}
\end{eqnarray}
We apply inequality~(\ref{eq:help}) in order to show that the cut $(S_2\setminus U, V\setminus (S_2\setminus U))$,
which also separates $s_2$ and $T_2$, is cheaper than
the community cut inducing $S_2$, which leads to a contradiction.

We represent the costs of the two cuts as follows:
\begin{eqnarray}
c(S_2, V\setminus S_2) = & & \nonumber \\
c(S_2\setminus U, S_1\setminus U) & + & c(U,S_1\setminus U) + c(S_2, V\setminus (S_1\cup S_2)) \label{cut1}\\
c(S_2\setminus U, V\setminus (S_2\setminus U)) =& &\nonumber \\
c(S_2\setminus U, S_1\setminus U) &+& c(S_2\setminus U,U) + c(S_2\setminus U, V\setminus (S_1\cup S_2)) \label{cut2}\
\end{eqnarray}
Since $(S_2\setminus U)\subseteq S_2$ it is $c(S_2\setminus U, V\setminus (S_1\cup S_2)) \leq c(S_2, V\setminus (S_1\cup S_2))$
and with~(\ref{eq:help}) we see that~(\ref{cut1}) $<$~(\ref{cut2}). 

Proof of (2i):\\
Suppose $S_1\setminus S_2 \not=\emptyset$ (cp.~Fig~\ref{fig:TtwoSone_g}).
Then it is
$c(U,S_2) \leq c(U, V\setminus (S_1\cup S_2))$, 
since otherwise $(S_1\cup S_2, V\setminus(S_1\cup S_2))$ would be a cheaper $s_2$-$T_2$-cut
than the community cut inducing $S_2$.
Since $(S_1\setminus U) \cup (S_2\setminus S_1) = S_2$, it is
\begin{eqnarray}
c(U,S_1\setminus U) + c(U,S_2\setminus S_1) &\leq& c(U,V\setminus (S_1\cap S_2)). \label{eq:help2}
\end{eqnarray}
We apply inequality~(\ref{eq:help2}) in order to show that the cut $(S_1\setminus U,V\setminus (S_1\setminus U))$,
which also separates $s_1$ and $T_1$,
is at most as expansive
as the community cut inducing $S_1$, which leads to a contradiction, since $|S_1\setminus U| < |S_1|$.

We represent the costs of the two cuts as follows:
\begin{eqnarray}
c(S_1, V\setminus S_1) = & & \nonumber \\
c(S_1\setminus U, S_2 \setminus S_1) & + & c(U,S_2\setminus S_1) + c(S_1\setminus U, V\setminus (S_1\cup S_2))\nonumber\\
&+ & c(U,V\setminus(S_1\cup S_2)) \label{cut3}\\
c(S_1\setminus U, V\setminus (S_1\setminus U)) =& &\nonumber \\
c(S_1\setminus U, S_2\setminus S_1) &+& c(S_1\setminus U,U) + c(S_1\setminus U, V\setminus (S_1\cup S_2)) \label{cut4}\
\end{eqnarray}
If we add $c(U,S_2\setminus S_1)$ to~\ref{cut4} and apply~(\ref{eq:help2}) we get a result that is at most as expansive than~$\ref{cut3}$.
Hence,~(\ref{cut4}) $\leq$ (\ref{cut3}).
But $S_1\setminus U$ is smaller than $S_1$ contradicting the fact that $S_1$ is a source community.

Proof of (2ii):\\
Since the general case applies, it is $S_1\subseteq S_2$ (cp.~Fig~\ref{fig:TtwoSone_s}).
Furthermore, with $S_1$ also separating $s_2$ and $T_2$ and $S_2$ also separating $s_1$ and $T_1$ we get $\lambda(s_1, T_1) = \lambda(s_2, T_2)$,
and thus, $(S_1,V\setminus S_1)$ is also a minimum $s_2$-$T_2$-cut with $|S_1|\leq|S_2|$.
Hence, it must be $S_1 = S_2$, otherwise $S_2$ would not be the source community of $s_2$ with respect to $T_2$.

Proof of (3):\\
In the remaining cases it is $S_1\cap S_2 = \emptyset$. Hence, $S_1$ and $S_2$ are not disjoint.
Furthermore, it is either $T_1\cap S_2\not= \emptyset$ and $T_2 \cap S_1 \not= \emptyset$
or $T_1\cap S_2\not= \emptyset$ and $s_1\in S_1\setminus S_2$.
Thus, $S_1$ and $S_2$ are not nested.
Figure~\ref{fig:intersecEx} shows an example where $S_1$ and $S_2$ exist and are neither nested nor disjoint.
\qed
\end{proof}
\begin{figure}[bt]
\centering
\subfigure[Case (1)]{
		\label{fig:Case_i}
		\includegraphics[width = 2.5cm, page=1]{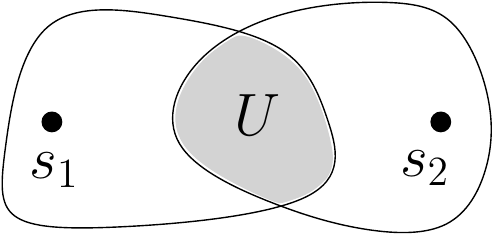}
	}
\hspace{1ex}
	\subfigure[Case (2i)]{
		\label{fig:TtwoSone_g}
		\includegraphics[width = 2.5cm, page=1]{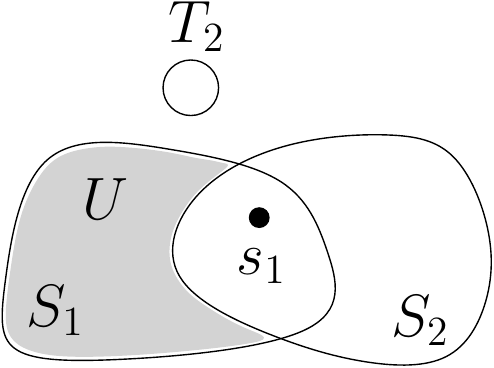}
	}
\hspace{1ex}
	\subfigure[Case (2ii)]{
		\label{fig:TtwoSone_s}
		\includegraphics[width = 2.5cm, page=2]{TtwoSone.pdf}
	}
\hspace{1ex}
	\subfigure[Case (3)]{
		\label{fig:intersecEx}
		\includegraphics[width = 2.3cm, page=1]{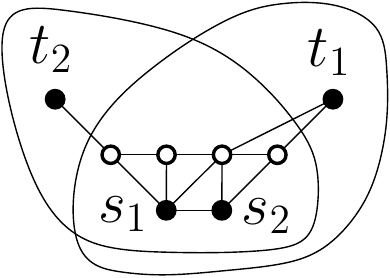}
	}
	\vspace*{-2ex}
	\caption{Situation in Lemma~\ref{lem:intersecBehavior}.}
	\label{fig:intersecBehavior}
\vspace{-0ex}
\end{figure}
\vspace*{-1.5ex}
\section{Complete Hierarchical Cut Clustering}\label{sec:cutClus}
\vspace*{-1ex}
The clustering algorithm of Flake et al.~\cite{ftt-gcmct-04} exploits the properties of minimum separating cuts together with a parameter $\alpha$ 
in order to get clusterings where the clusters are SCs
with the following additional property:
For each cluster $C \in \C$ and each $U\subsetneq C$ it holds
\vspace{-2ex}
\begin{eqnarray}
\frac{c(C,V\setminus C)}{|V\setminus C|}\leq & \alpha& \leq \frac{c(U, C\setminus U)}{\min\{|U|,|C\setminus U|\}} \label{eq:quality}
\end{eqnarray}
According to the left side of this inequality separating a cluster $C$ from the rest of the graph costs at most 
$\alpha |V\setminus C|$ which guarantees a certain inter-cluster sparsity. 
The right side further guarantees a good intra-cluster density in terms of expansion, a measure introduced by~\cite{kvv-cgds-00},
saying that splitting a cluster~$C$ into~$U$ and $C\setminus U$ costs at least $\alpha \min\{|U|,|C\setminus U|\}$.
Hence, the vertex sets representing valid candidates for clusters must be very tight---in addition 
to the predominant connectivity they must also provide an expansion that exceeds a given bound.

\begin{wrapfigure}[13]{r}{.49\textwidth}
\vspace*{-7ex}
\begin{algorithm2e}[H]
		\caption{\algo{CutC}}
		\label{alg:CutC}
		\SetKwComment{tco}{\%}{}
		\KwIn{Graph $\Ga =(\Va,\Ea,\caa)$}
		$\C \gets \emptyset$\;
		\While{$\exists\; u \in V_\alpha \setminus \{t\}$}{\nllabel{ln:simpleCC:while}
		  $C^u \gets$ $\SC(u,t)$ in $\Ga$\;  
		  $r(C^u) \gets u$  \nllabel{ln:simpleCC:flowCC}\;
		  \ForAll{$C^i \in \C$}{
		    \lIf{$r(C^i) \in C^u$} {$\C \gets \C\setminus \{C^i\}$ }
		 }		
		$\C \gets \C \cup \{C^u\}\,$; $V_\alpha \gets V_\alpha \setminus C^u$  \nllabel{ln:simpleCC:b}
		}
		\Return $\mathcal C$
\end{algorithm2e}
\end{wrapfigure}
Flake et al.\ develop their parametric cut clustering algorithm step by step starting from an idea involving cut trees~\cite{gh-mtnf-61}. The final approach, however, just uses community-cuts in a modified graph in order to identify clusters that satisfy condition~(\ref{eq:quality}). 
We refer to this approach by CutC. Here we give a more direct description of this method.
Given a graph $G = (V,E,c)$ and a parameter $\alpha > 0$, as a preprocessing step, augment $G$ by inserting an artificial vertex $t$ and connecting $t$ to each vertex in $G$ by an edge of cost~$\alpha$. Denote the resulting graph by $G_\alpha = (V_\alpha, E_\alpha, c_\alpha)$. Then apply CutC (Alg.~\ref{alg:CutC}) by iterating~$V$ and computing $\SC(u,t)$ for each vertex $u$ not yet contained in a previously computed community. The source $u$ becomes the representative of the newly computed SC (line~\ref{ln:simpleCC:flowCC}).
Since SCs with respect to a common vertex $t$ are either disjoint or nested (Lemma~\ref{lem:intersecBehavior}(1),(2i)), we finally get a set $\C$ of SCs
in $G_\alpha$, which together decompose~$V$. 
Since the vertices in $G_\alpha$ are additionally connected to $t$, 
each SC in $G_\alpha$ with respect to~$t$ is also a SC in $G$.
However it is not necessarily a maximal SC in $M(G)$. 

Applying CutC iteratively with decreasing $\alpha$ 
yields a hierarchy of at most~$n$ different clusterings (cp.~Figure~\ref{fig:hierarcy}). 
This is due to a special nesting property for different parameter values. 
Let~$C_1$ denote the SC of~$u$ in~$G_{\alpha_1}$ and~$C_2$ the SC of~$u$ in~$G_{\alpha_2}$.
Then it is
 $C_1 \subseteq C_2$ if $\alpha_1 \geq \alpha_2$.
The hierarchy is bounded by two trivial clusterings, which we already know in advance. The clustering at the top consists of the connected components of $G$ and 
is returned by CutC for $\alpha_{\max} = 0$, the clustering at the bottom
consists of singletons and comes up if we choose 
$\alpha_0$ equal to the maximum edge cost in~$G$.

\vspace*{-1ex}
\subsubsection{Simple Parametric Search Approach.}
The crucial point with the construction of such a hierarchy, 
however, is the choice of~$\alpha$. If we choose the next 
value too close to a previous one, we get a clustering we 
already know, which implies unnecessary effort. If we 
choose the next value too far from any previous, we 
possibly miss a clustering. Flake et al.\ propose a binary 
search for the choice of $\alpha$. However, this necessitates 
a discretization of the parameter range---an issue where 
again limiting the risk of missing interesting values by small 
steps is opposed to improving the running time by wide steps.
In practise the choice of a good coarseness of the discretization 
requires previous knowledge on the graph structure,
which we usually do not have.  Thus, we introduce a simple 
parametric search approach for constructing a 
complete\footnote{The completeness refers to all clusterings 
that can be obtained by CutC for a value~$\alpha$.} hierarchy 
that does not require any previous knowledge.

\begin{wrapfigure}[10]{r}{.48\textwidth}
\vspace{-6ex}
\centering
\includegraphics[width = 5.5cm]{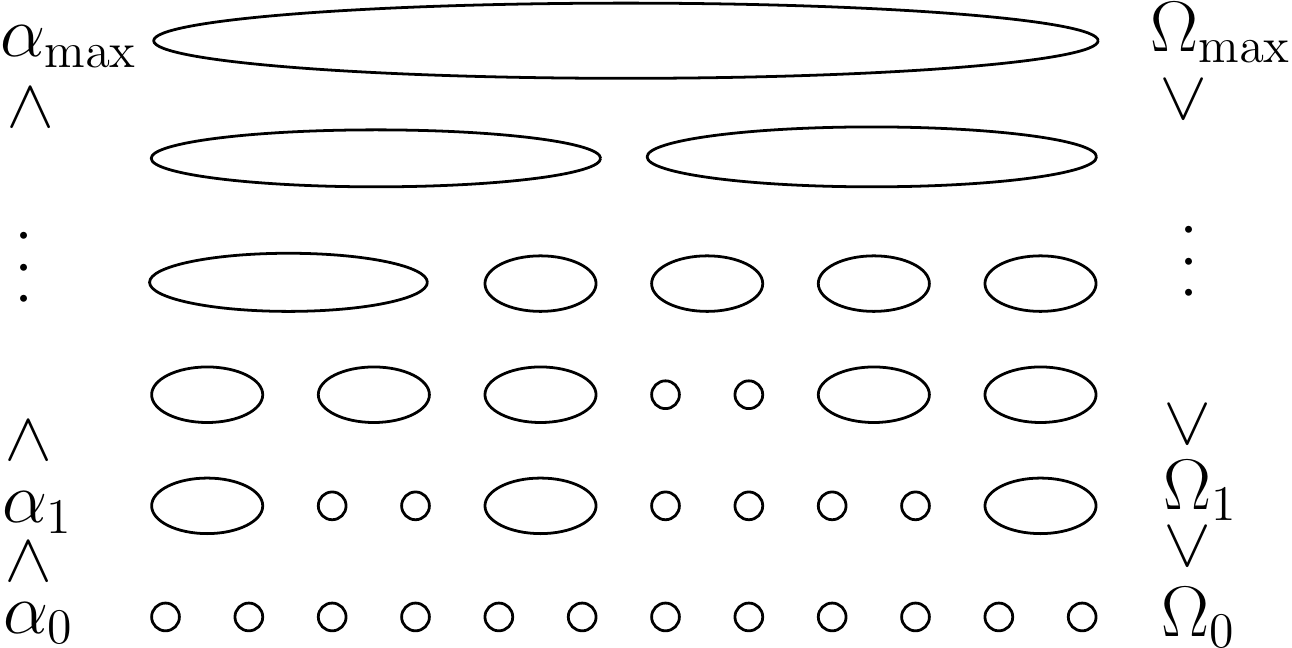}
\caption{Clustering hierarchy by CutC. Note, $\alpha_{\max} < \alpha_0$ whereas $\C_{\max} > \C_0$. }
\label{fig:hierarcy}
\end{wrapfigure}

For two consecutive hierarchy levels $\C_i < \C_{i+1}$ we 
call $\alpha'$ the \emph{breakpoint} if CutC returns $\C_i$ 
for $\alpha'$ and $\C_{i+1}$ for $\alpha' - \varepsilon$ 
with $\varepsilon \rightarrow 0$.
The simple idea of our approach is to compute
good candidates for breakpoints during a recursive search
with the help of cut-cost functions of the clusters,
such that each candidate that is no breakpoint yields a
new clustering instead. In this way, we apply CutC at most 
twice per level in the final hierarchy.
Beginning with the trivial 
clusterings $\C_0 < \C_{\max}$ ($\alpha_0 > \alpha_{\max} $),
the following theorem directly implies an efficient 
algorithm.
\newcommand{\paraApp}{
Let $\C_{i} < \C_{j}$ denote two different clusterings with 
parameter values $\alpha_i > \alpha_j$. 
In time $O(|\C_i|)$ a parameter value $\alpha_m$ with 
1) $\alpha_j < \alpha_m \leq \alpha_i$ can be computed such that 
2)~$\C_{i} \leq \C_{m} < \C_{j} $, and 3)~$\C_{m} = \C_{i}$ implies 
that $\alpha_m$ is the breakpoint between $\C_{i}$ and $\C_{j}$.
}
\begin{theorem}
\label{theo:main}
\paraApp
\end{theorem}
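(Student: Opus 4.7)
The plan is to exploit the cut-cost functions $f_C(\alpha) := c(C, V\setminus C) + \alpha|C|$, which give the cost in $G_\alpha$ of the cut separating $C$ from $\{t\}\cup(V\setminus C)$ and are linear in $\alpha$ with slope $|C|$. Since $\C_i \leq \C_j$, each cluster $C \in \C_i$ lies in a unique cluster $C^* \in \C_j$. For every $C$ with $C \subsetneq C^*$, the two lines $f_C$ and $f_{C^*}$ meet at a single point
\[
\alpha_C \;=\; \frac{c(C, V\setminus C) - c(C^*, V\setminus C^*)}{|C^*| - |C|};
\]
I define $\alpha_m := \max_{C\subsetneq C^*}\alpha_C$, which is well defined since $\C_i \neq \C_j$. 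Both the containment map $C\mapsto C^*$ and the cut costs are by-products of the runs of CutC that produced $\C_i$ and $\C_j$, so $\alpha_m$ is obtained in $O(|\C_i|)$ time.

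For Property~1, note that $|C^*|>|C|$ makes $f_{C^*}$ have the strictly larger slope, so as $\alpha$ decreases $f_{C^*}$ overtakes $f_C$ exactly at $\alpha_C$. The upper bound $\alpha_C\leq\alpha_i$ is immediate from $f_C(\alpha_i)\leq f_{C^*}(\alpha_i)$, which holds because $C$ is the SC of $r(C)$ in $G_{\alpha_i}$. For the strict lower bound $\alpha_C>\alpha_j$, the cluster of $r(C)$ in $\C_j$ is the strictly larger $C^*$, so $(C,V\setminus C)$ is not the minimum $r(C)$-$t$-cut with the smallest source side at $\alpha_j$; the minimum-source-side characterization of Lemma~\ref{lem:charSC} then forces $f_{C^*}(\alpha_j) < f_C(\alpha_j)$ strictly, and hence $\alpha_C>\alpha_j$. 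Taking the maximum preserves both bounds.

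For Property~2, the upper bound $\alpha_m\leq\alpha_i$ combined with the nesting of CutC's output under decreasing $\alpha$ immediately yields $\C_i\leq\C_m$. For strict inequality $\C_m<\C_j$, fix a $C$ realizing $\alpha_m$ and let $\tilde C$ denote the SC of $r(C)$ in $G_{\alpha_m}$; nesting on both sides gives $C\subseteq\tilde C\subseteq C^*$. I claim $\tilde C\subsetneq C^*$. If $C^*$ is a minimum $r(C)$-$t$-cut at $\alpha_m$, then by $f_C(\alpha_m)=f_{C^*}(\alpha_m)$ so is $C$, and the minimum-source-side rule forces $|\tilde C|\leq|C|<|C^*|$; if $C^*$ is not a minimum cut, then $\tilde C$ achieves a strictly smaller cost and therefore cannot equal $C^*$. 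Either way $\tilde C\subsetneq C^*$, so $r(C)$'s cluster in $\C_m$ is strictly contained in its cluster in $\C_j$, giving $\C_m\neq\C_j$ and, together with $\alpha_m>\alpha_j$, $\C_m<\C_j$.

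For Property~3, assume $\C_m=\C_i$; then the argument above forces $\tilde C=C$. For any $\varepsilon>0$, the strictly larger slope of $f_{C^*}$ yields $f_{C^*}(\alpha_m-\varepsilon) = f_C(\alpha_m)-\varepsilon|C^*| < f_C(\alpha_m)-\varepsilon|C| = f_C(\alpha_m-\varepsilon)$, so $C$ is no longer a minimum cut for $r(C)$ at $\alpha_m-\varepsilon$ and its SC must strictly contain $C$. Combined with the monotonicity of CutC in $\alpha$, this shows $\C_\alpha=\C_i$ on $[\alpha_m,\alpha_i]$ while $\C_{\alpha_m-\varepsilon}\neq\C_i$, identifying $\alpha_m$ as the (topmost) breakpoint at which $\C_i$ transitions inside the range $(\alpha_j,\alpha_i]$. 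The main obstacle I anticipate is the case analysis in Property~2 ruling out $\tilde C=C^*$, and its counterpart for Property~1: both require careful combination of the nesting of SCs across $\alpha$, the minimum-source-side tie-break, and the maximality built into the choice of $\alpha_m$, in particular to handle the corner case in which $C$ survives as a SC at $\alpha_j$ but is absorbed into the CutC-cluster $C^*$ via a representative in $C^*\setminus C$.
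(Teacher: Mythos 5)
Your overall strategy---placing candidate breakpoints at intersection points of the linear cut-cost functions $\omega_C(\alpha)=c(C,V\setminus C)+|C|\,\alpha$ and testing each candidate with one run of CutC---is the same as the paper's, but your aggregation rule for $\alpha_m$ is wrong, and this breaks Property~3. You take $\alpha_m:=\max_{C\subsetneq C^*}\alpha_C$, the global maximum over all child--parent pairs; the paper takes $\alpha_m:=\min_{C^*\in\C_j}\max_{C\subset C^*}\alpha_C$, i.e.\ first the maximum over the children of each fixed parent, then the \emph{minimum} over parents. The difference matters exactly when distinct parents in $\C_j$ form at different thresholds. Suppose $\C_j$ has parents $P_1,P_2$ whose maximal child-intersection points are $5$ and $3$. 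With your $\alpha_m=5$ one gets $\C_m=\C_i$ (at $\alpha=5$ every child of either parent is at least as cheap as, and smaller than, its parent, so neither parent can be the SC of any of its vertices), yet at $\alpha=4$ the clustering contains $P_1$ but not $P_2$: a level strictly between $\C_i$ and $\C_j$ exists, your recursion terminates this branch, and the hierarchy is incomplete. The paper's outer $\min$ guarantees that whenever $\C_m=\C_i$, \emph{every} parent's maximal intersection point equals $\alpha_m$ (sub-claim~(a) of the paper's proof), which is the fact needed to show that all parents form simultaneously just below $\alpha_m$.

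Relatedly, even granting your definition, your Property~3 argument only shows that the clustering changes at $\alpha_m-\varepsilon$, i.e.\ that $\alpha_m$ is \emph{a} breakpoint of $\C_i$; it never shows that the clustering at $\alpha_m-\varepsilon$ equals $\C_j$, which is what ``the breakpoint between $\C_i$ and $\C_j$'' asserts and what completeness rests on. The paper devotes the second half of its Claim~3 proof to exactly this, showing each $C\in\C_j$ is the SC of the representative of its critical child at $\alpha_m-\varepsilon$ via a slope comparison. Two smaller flaws: your claim that $\alpha_C>\alpha_j$ for \emph{every} pair is false---a child $C$ that is still the SC of its own representative at $\alpha_j$ but is absorbed into $C^*$ through a different representative satisfies $\omega_C(\alpha_j)\leq\omega_{C^*}(\alpha_j)$, hence $\alpha_C\leq\alpha_j$; the strict bound holds only for the child containing the parent's representative $r(C^*)$, which is the pair the paper argues through. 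And in Property~2 you pass from $\SC(r(C),t)\subsetneq C^*$ to $\C_m\neq\C_j$, overlooking that $C^*$ could still appear in $\C_m$ as the SC of a different representative; this case must be (and in the paper is) excluded separately.
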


\newpage
\begin{sketch}
We use \emph{cut-cost functions} that represent, 
depending on $\alpha$, the cost $\omega_S(\alpha)$ of 
a cut $(S, V_\alpha \setminus S)$ in $G_\alpha$ based on 
the cost of the cut $(S, V\setminus S)$ in~$G$ and the size of $S$.
\vspace{-3ex}
\begin{eqnarray}
\omega_S : \mathbb R^+_0 & \longrightarrow & [c(S,V \setminus S), \infty) \subset \mathbb R^+_0 \nonumber \\
\omega_S (\alpha) & := & c(S,V \setminus S) + |S|\;\alpha \nonumber
\end{eqnarray}
The main idea is the following. Let $\C_i < \C_j$ denote 
two hierarchically nested clusterings. 
We call a cluster $C' \in \C_i$ that is nested in $C\in \C_j$ a 
child of $C$ and $C$ the parent of $C'$.
 If there exists another level $\C'$ between
$\C_i$ and $\C_j$, at least two clusters in $\C_i$ must be 
merged yielding a larger cluster in $\C'$. The maximal 
parameter value where this happens is a value $\alpha^*$ 
where a child~$C'$ in~$\C_i$ becomes more expensive 
than its parent $C$ in~$\C_j$, and thus, is dominated by $C$ 
in the sense that it will not become a cluster in any 
hierarchy level above~$\alpha^*$ (i.e., where $\alpha < \alpha^*$).
For two nested clusters $C'\subseteq C$ this point is 
marked by the intersection point of the cut-cost functions $\omega_{C'}$ 
and~$\omega_{C}$ (Figure~\ref{fig:cutfunc}).
Thus, this intersection point is a good 
candidate for a breakpoint between $\C_i$ and $\C'$.
We choose
$\alpha_m := \min_{C\in\C_j }\lambda_{C}$ with 
$\lambda_{C} := \max_{C'\in \C_i: C'\subset C}\{\alpha\mid \omega_{C}(\alpha) = \omega_{C'}(\alpha) \} $ 
and prove that Claim 1) to 3) as stated in Theorem~\ref{theo:main} 
hold with this choice of~$\alpha_m$. The proofs are rather technical, 
thus we postpone them to Appendix~\ref{app:CutClus}.

For the running time, observe that $\alpha_m$ is well-defined as each parent function intersects with at least one child function.
In practice we construct $\alpha_m$
by iterating the list of representatives stored for
$\C_i$. These representatives are assigned to a cluster in $\C_j$, thus, matching children to their parents can be done in time $O(|\C_i|)$. The computation of the intersection points takes only constant time, given that the sizes and costs of the clusters are stored with the representatives by CutC. In total, the time for computing $\alpha_m$ is thus in $O(|\C_i|)$.
\end{sketch}
\vspace*{-2ex}
\subsubsection{Running time.}
\begin{wrapfigure}[8]{r}{.25\textwidth}
\vspace{-6ex}
\centering
\includegraphics[width = 2.3cm]{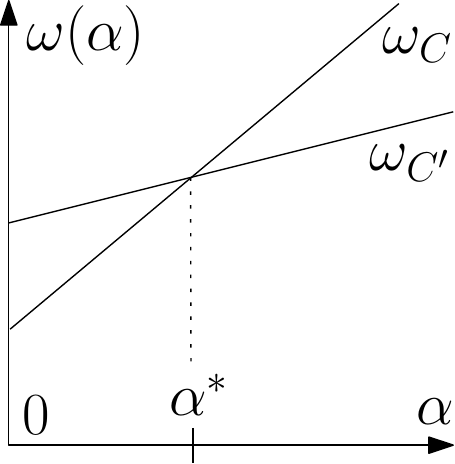}
\vspace{-1.5ex}
\caption{\small Intersecting cut-cost functions. }
\label{fig:cutfunc}
\end{wrapfigure}
The parametric search approach calls CutC twice per level in the final hierarchy, once when computing a level the first time and again right before detecting that the level already exists and a breakpoint is reached.
The trivial levels $\C_{\max}$ and $\C_0$ are calculated in advance without using CutC. Nevertheless, $\C_0$ is recalculated once when the breakpoint to the lowest non-trivial level is found. This yields $2(h-2)+1$ applications of CutC, with $h$ the number of levels.
We denote the running time of CutC by~$T(n)$ without further analysis.
For a more detailed discussion on the running time of CutC see~\cite{ftt-gcmct-04}. 
Since common min-cut algorithms run in $O(n^2 \sqrt{m})$ time, a single min-cut computation already dominates
the costs for determining $\alpha_m$ and further linear overhead. The running time of our simple parametric approach thus is in $O(2h\,T(n))$, where $h\leq n-1$. This obviously improves the running time of a binary search, which is in $O(h\, \log(d)\,T(n))$, with $d$ the number of discretization steps---in particular since we may assume $d\gg n$ in order to minimize the risk of missing levels.
We also tested the practicability of our simple approach by a brief experiment.
The results confirm the improved theoretical running time. We provide them in 
Appendix~\ref{app:CutClus} as bonus. 

\vspace*{-1ex}
\section{Framework for Analyzing SC Structures}\label{sec:framework}
\vspace*{-1ex}
In general, clusterings in which all clusters are SCs 
are only partially hierarchically ordered.
Hence, hierarchical algorithms like 
the cut clustering algorithm of Flake et al.~\cite{ftt-gcmct-04} 
provide only a limited view on the whole 
SC structure of a network.
In this section we develop a framework for efficiently 
analyzing different hierarchies in the SC
structure after precomputing at most
$2(n-1)$ maximum flows. The basis of our framework is the set
$M(G)$ of maximal SCs in $G$.
This can be 
represented by a
cut tree of special community cuts,
together with some additionally stored SCs,
as we will show in the following.

A (general) \emph{cut tree} is a weighted tree 
$\T(G) = (V, E_\T,c_\T)$ on the vertices of an undirected,
weighted graph $G = (V,E,c)$ (with edges not necessarily in $G$) 
such that each $\{s,t\} \in E_\T$ induces a minimum $s$-$t$-cut in $G$ 
(by decomposing $\T(G)$ into two connected components) and 
such that $c_\T(\{s,t\})$ is equal to the cost of the induced cut.
The cut tree algorithm, which was first introduced by Gomory and 
Hu~\cite{gh-mtnf-61} in their pioneering work on cut trees and 
later simplified by Gusfield~\cite{g-vsmap-90}, 
applies $n-1$ cut computations.
For a detailed description of this algorithm see~\cite{gh-mtnf-61,g-vsmap-90} or Appendix~\ref{app:cutTree}.

The main idea of the cut tree algorithm is to iteratively choose 
vertices~$s$ and~$t$ that are not yet separated by a previous 
cut, and separating them by a minimum $s$-$t$-cut, which 
is represented by a new tree edge $\{s,t\}$. 
Depending on the shape of the found cut it might be 
necessary to reconnect previous edges in the intermediate tree. 
Gomory and Hu showed that a reconnected edge also 
represents a minimum $s'$-$t'$-cut for the new vertices~$s'$
 and $t'$ incident to the edge after the reconnection. Furthermore, the constructed 
 cuts need to be non-crossing in order to be representable by a tree. 
While Gomory and Hu prevent crossings with the help of contractions, 
Gusfield shows that a crossing of an arbitrary 
minimum $s$-$t$-cut with another minimum separating cut can 
be easily resolved, if the latter does not separate $s$ and $t$.
Hence, the cut tree algorithm basically admits the use of arbitrary 
minimum cuts. 

For our special cut tree we choose the following community cuts:
for a vertex pair $\{s,t\}$ let $(S,V\setminus S)$ denote the 
community cut inducing $S:=\SC(s,t)$ and
let $(T,V\setminus T)$ denote the community cut inducing 
$T:=\SC(t,s)$. If $|S|\leq |T|$, we choose $(S,V\setminus S)$, 
and $(T,V\setminus T)$ otherwise. Furthermore, we direct 
the corresponding tree edge to the chosen SC, and we
associate the opposite SC, which was not chosen, also with the edge,
storing it elsewhere for further use.
In Appendix~\ref{app:cutTree}
we show that the so chosen "smallest" community cuts are already non-crossing, hence a transformation according to Gusfield is not necessary. This guarantees that the cuts  represented in the final tree are the same community cuts as chosen for the construction. We further show that after reconnecting an edge, the corresponding cut still induces a "smallest" SC for the vertex the edge points to. Altogether, this proves the following.
\newcommand{\corCutTree}{
For an undirected, weighted graph $G=(V,E,c)$ there
exists a rooted cut tree $\T(G) = (V,E_\T,c_\T)$ with edges directed to the leaves 
such that each edge $(t,s) \in E_\T$ represents $\SC(s,t)$, and $|\SC(s,t)|\leq |\SC(t,s)|$. Such a tree can be constructed by $n-1$ maximum flow\footnote{Max-flows are necessary in order to determine a smallest SC. For general cut trees preflows (after the first phase of common max-flow-push-relabel algorithms) suffice.} computations.
}
\begin{theorem}
\label{cor:cutTree}
\corCutTree
\end{theorem}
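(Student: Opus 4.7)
The plan is to follow Gusfield's variant of the Gomory--Hu cut tree algorithm, but to replace the free choice of an arbitrary minimum $s$-$t$-cut in each step by a specific canonical choice: compute a maximum $s$-$t$-flow, extract both community cuts inducing $\SC(s,t)$ and $\SC(t,s)$ (these are the unique minimum $s$-$t$-cuts minimizing the $s$-side and the $t$-side, respectively, by Lemma~\ref{lem:charSC}), and pick the one with the smaller source community, breaking ties arbitrarily. If $|\SC(s,t)| \leq |\SC(t,s)|$ we add an edge directed from $t$ to $s$ and record the opposite SC on the side; otherwise we flip roles. After $n-1$ such iterations every pair of vertices is separated by some constructed cut, so the result is a rooted tree on $V$ whose leaves carry the terminal orientation required by the statement.

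The first technical step is to verify that two smallest community cuts never cross. Let $(S_1, V\setminus S_1)$ with $S_1 = \SC(s_1,t_1)$ be a cut fixed in some earlier iteration and $(S_2, V\setminus S_2)$ with $S_2 = \SC(s_2,t_2)$ the cut chosen in the current one. Gusfield's algorithm guarantees that $\{s_2, t_2\}$ lie on a common side of the previous cut; by the orientation convention we may assume $s_2 \notin S_1$ and $t_1 \notin S_2$, so the hypotheses of case~(1) or case~(2i) of Lemma~\ref{lem:intersecBehavior} apply. In both cases the conclusion forces $S_1$ and $S_2$ to be nested or disjoint, which is exactly the non-crossing property. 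Consequently the cut--resolution step that Gusfield needs for arbitrary minimum cuts is vacuous here, and every chosen community cut survives unchanged into the final tree.

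The second technical step is to verify that the edge--reconnection operation preserves the invariant ``edge represents a smallest community cut oriented toward the smaller SC''. When the new cut $(S,V\setminus S)$ with $s\in S$ is introduced, a previous tree edge whose endpoints lie entirely outside $S$ is reattached to $t$; the Gomory--Hu argument shows that the same cut is still a minimum cut between the reattached vertex and $t$. Using Lemma~\ref{lem:intersecBehavior} together with the fact that the recorded opposite SC is disjoint from or contains $S$, I would argue that the reattached cut still induces a smallest source community for its new endpoint, so the orientation of the edge is preserved. Iterating this invariant over all $n-1$ iterations establishes both the structural and the size conditions claimed in the theorem.

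The main obstacle is the non-crossing claim: a priori, two community cuts could cross in a way that would force Gusfield's resolution and destroy the ``smallest-SC'' property, so the argument has to exploit both the source-community structure (the case analysis of Lemma~\ref{lem:intersecBehavior}) and the minimality of the chosen side simultaneously. Once that is in place, the running time bound is immediate: exactly $n-1$ iterations, each dominated by one maximum $s$-$t$-flow; note that a full max-flow (not merely a preflow) is needed because identifying the unique minimum cut that minimizes one side requires the exact residual graph.
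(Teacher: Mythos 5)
Your overall plan is the same as the paper's: run the Gomory--Hu/Gusfield construction, always take the community cut with the smaller source community, and argue (a) that these cuts never cross and (b) that reconnection preserves the ``smallest SC'' semantics of every edge. However, there is a genuine gap in step (a). You write that ``by the orientation convention we may assume $s_2 \notin S_1$ and $t_1 \notin S_2$,'' but when the new step pair $\{s_2,t_2\}$ lies \emph{inside} the previously chosen small community $S_1$ (which happens every time a non-singleton node of the intermediate tree is split further), you necessarily have $s_2 \in S_1$ and neither assumption can be made. In that case Lemma~\ref{lem:intersecBehavior}(2i) only yields $S_2 \subseteq S_1$ under the extra hypothesis $t_1 \notin S_2$; a priori the new community $S_2$ could contain $t_1$ and cross the boundary of $S_1$. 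Excluding this requires combining the minimality of $|S_1|$ and $|S_2|$ with a cut-bending argument (Gusfield's lemma that a minimum cut can be deflected along a non-separating minimum cut), which shows that $t_1 \in S_2$ would force a strictly smaller SC for the new pair and contradict the ``smallest'' choice. This is precisely the content of the paper's Lemma~\ref{lem:refineInside}, and it is the one place where the intersection lemma alone does not suffice.

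Step (b) is also only asserted (``I would argue that the reattached cut still induces a smallest source community''). The paper proves this via a case analysis (Lemmas~\ref{lem:refineInside} and~\ref{lem:refineOutside}) that crucially relies on a specific step-pair selection rule --- the new pair always consists of the last considered vertex together with a fresh vertex of the node being split, so consecutive cuts share a terminal --- which your construction never imposes; without it the case analysis does not directly apply. Moreover, the analysis reveals that in one reconnection sub-case the stored \emph{opposite} SC genuinely becomes invalid and must be recomputed later (this is why $M(G)$ costs up to $2(n-1)$ flows even though the tree itself costs $n-1$), so the invariant is more fragile than your sketch assumes. Your concluding remarks on non-crossing being the crux and on needing full max-flows rather than preflows are correct, but the two arguments above need to be supplied for the proof to be complete.
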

At the price of $O(n^2)$ 
additional space, the opposite SCs resulting from the cut tree
construction can be naively stored in an $(n-1)\times n$ matrix,
which admits to check the membership of a vertex to an 
opposite SC in constant time.
In many cases we even need only $k\leq (n-1)$ rows in the matrix,
since some edges share the same SC, and we can 
deduce these edges during the cut tree construction.
However, for few edges the determined opposite SC might become 
invalid again, due to a special situation while reconnecting the edge. 
For these edges we need to recalculate the opposite SCs in a second step. 
Hence, the construction of $\T(G)$ together with the opposite 
SCs associated with the edges in $\T(G)$ can be done by at most
$2(n-1)$ max-flow computations.
We now show that each SC in $M(G)$ is either given by an edge or is an opposite SC associated with an edge in $\T(G)$.

\newcommand{\lemNumMaxSCs}{
For an undirected weighted graph $G= (V,E,c)$ 
it is $n \leq |M(G)| \leq 2(n-1)$. Constructing $M(G)$  
needs at most $2(n-1)$ max-flow computations.
}

\begin{theorem}
  \label{lem:numMaxSCs}
  \lemNumMaxSCs
\end{theorem}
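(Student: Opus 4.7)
The plan is to break the theorem into three parts: the construction cost of at most $2(n-1)$ max-flow computations, the upper bound $|M(G)| \leq 2(n-1)$, and the lower bound $n \leq |M(G)|$.

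The construction cost follows directly from the discussion preceding the theorem. Theorem~\ref{cor:cutTree} yields $\T(G)$ in $n-1$ max-flow computations, and for each edge the associated opposite SC is typically obtained from the same flow. At most $n-1$ additional max-flow computations are needed for those edges whose opposite SCs were invalidated by a reconnection during the tree construction, giving a total of at most $2(n-1)$ max-flows.

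For the upper bound, the strategy is to show that every maximal SC $M \in M(G)$ is realized as either an edge SC or an opposite SC of some edge in $\T(G)$. Since $M$ is maximal, $M = \SC(s,t)$ for some $s,t \in V$. I would consider the path from $s$ to $t$ in $\T(G)$ and focus on the minimum-weight edge on this path, which by the Gomory--Hu property induces a min $s$-$t$-cut with the same cost as $\lambda(s,t)$. Then, applying Lemma~\ref{lem:intersecBehavior} to compare $M$ with the community cuts stored at edges along this path, I would argue that $M$ must coincide with either the edge SC of that edge (the smaller side) or its opposite SC (the larger side). Since $\T(G)$ has $n-1$ edges, each contributing at most two associated SCs, the bound $|M(G)| \leq 2(n-1)$ follows.

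For the lower bound, first note that the $n-1$ edge SCs of $\T(G)$ are pairwise distinct: by Theorem~\ref{cor:cutTree}, each edge represents a specific partition $(\SC(s,t), V \setminus \SC(s,t))$, and distinct tree edges necessarily induce distinct partitions. Since edges in $\T(G)$ are directed toward the leaves, at least one vertex $r$ has in-degree zero, and such a vertex is excluded from every edge SC (each edge SC contains the child endpoint but not the parent). For any edge $(r,s)$ incident to $r$, the associated opposite SC $\SC(r,s)$ contains $r$ and is therefore a maximal SC distinct from every edge SC. This yields at least $(n-1)+1 = n$ distinct maximal SCs (an analogous argument applies component-wise when $G$ is disconnected, using the cut trees of the individual components).

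The main obstacle is the upper bound's claim that an arbitrary maximal SC $M = \SC(s,t)$ --- where $s$ and $t$ need not be adjacent in $\T(G)$ --- is realized as either an edge SC or an opposite SC of a specific tree edge. This requires carefully navigating the $s$-$t$ path in $\T(G)$ and using the detailed nesting analysis of Lemma~\ref{lem:intersecBehavior} to identify the unique tree edge whose stored SC structure realizes $M$, accounting for the fact that the tree by construction stores the ``smallest'' community cut at each edge rather than an arbitrary min cut.
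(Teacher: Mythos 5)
Your decomposition into lower bound, upper bound, and construction cost matches the paper's, and the lower-bound and cost arguments are essentially the paper's own (the $n-1$ pairwise distinct edge SCs, none of which contains the root, plus one maximal SC of the root). The problem is the upper bound, which is the substantive claim of the theorem: you correctly identify the target (every maximal SC is an edge SC or an opposite SC of $\T(G)$) but then defer the entire argument, and you yourself flag it as ``the main obstacle.'' As written this is a plan, not a proof, and the sketch you do give is not quite right in two respects. First, the cheapest edge on the $s$--$t$ path need not be unique, and the paper needs two different edges for the two directions: $\SC(q,p)$ (for $q$ a descendant of $p$) is the edge SC of the cheapest path edge \emph{closest to $q$}, while $\SC(p,q)$ is the \emph{opposite} SC of the cheapest path edge closest to $p$; proving the latter requires showing $p\in\SC(t,s)$ for that edge $(t,s)$ via Lemma~\ref{lem:shelteredByPrev} and then ruling out proper nesting with Lemma~\ref{lem:intersecBehavior}(2i)/(2ii) and the minimality of community cuts. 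Second, calling the opposite SC ``the larger side'' of the stored cut is inaccurate: when the minimum $s$-$t$-cut is not unique, $\SC(t,s)$ is a proper subset of $V\setminus\SC(s,t)$, which is exactly why the opposite SCs must be stored separately and may require extra flow computations.

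Moreover, your reduction treats only a single path edge, whereas the paper's argument is two-stage: when $u$ and $v$ lie in disjoint subtrees with nearest common ancestor $r$, the set $\SC(u,v)$ is in general \emph{not} a side of any single path edge's stored cut; rather it equals $\SC(u,r)$ (if no edge on $\pi(r,v)$ is cheaper than the cheapest edge on $\pi(r,u)$) or $\SC(r,v)$ (otherwise), and only these ancestor--descendant SCs are then identified with edge SCs or opposite SCs. Each of these identifications again needs an argument of the form ``the community cut does not separate $x$ from $y$, hence it is also a minimum cut for the other pair, and by Lemma~\ref{lem:intersecBehavior} plus minimality of the side the two SCs coincide.'' Without this case analysis the bound $|M(G)|\leq 2(n-1)$ is not established, so the proposal has a genuine gap precisely where you suspected it.
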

%
\begin{proof}
The SC-tree already represents $n-1$ different maximal source 
communities of~$G$ and there is at least one maximal source 
community of the root that is not represented by the tree.
Hence, there are at least $n$ maximal source communities in $G$.

In order to prove the upper bound we observe the following.
From the structure of the SC-tree if follows that 
if $p$ is a predecessor of $q$, the source community $Q(q,p)$ is 
given by the cheapest edge on the path between $p$ and $q$ that is closest to $q$.
We further show that 
(i) the source community $P(p,q)$ is the opposite source community associated
with the cheapest edge on the path from $p$ to $q$ that is closest to $p$.
If $u$ and $v$ are vertices in disjoint subtrees with $r$ the nearest 
common predecessor, we prove that the source community $U(u,v)$ 
(ii) equals the source community $U'(u,r)$ if no edge on the path from $r$ 
to $v$ is cheaper than the cheapest edge on the path from $r$ to $u$, and 
(iii) equals the source community $R(r,v)$, otherwise.
Since $r$ is a predecessor of $u$ and $v$, together with (i) this finally 
proves that there are at most $2(n-1)$ different maximal source communities in $G$.

\emph{Proof of (i):} 
Let $(t,s)\in E_\T$ denote the cheapest edge on $\pi(p,q)$ that is closest to~$p$.
Obviously it is $\lambda(p,q) = c_\T(t,s)$.
Since $(t,s)$ is closest to $p$, the community cut inducing $P(p,q)$ 
does not separate $p$ and $t$.
Furthermore, it is $p\in T(t,s)$, otherwise the community cut inducing $T(t,s)$ 
can be bend according to Lemma~\ref{lem:shelteredByPrev} such that it 
induces an edge of cost $\lambda(p,q)$ on $\pi(p,q)$ that is closer to $p$ 
than $(t,s)$. Hence, we have $\{p,t\} \subseteq P(p,q)\cap T(t,s)$, while $p\notin T(t,s)$.

If $s\notin P(p,q)$ we get the situation of Lemma~\ref{lem:intersecBehavior}(2ii), 
which yields $P(p,q) = T(t,s)$.
If $s\in P(p,q)$ we get $T(t,s)\subseteq P(p,q)$, according to 
Lemma~\ref{lem:intersecBehavior}(2i). However, since $T(t,s)$ also 
separates $q$ and $p$, it must hold $|P(p,q)| = |T(t,s)|$, which 
contradicts the assumption $s\in P(p,q)$.

\emph{Proof of (ii)}:
If no edge on $\pi(r,v)$ is cheaper than the cheapest edge on $\pi(r,u)$, 
any cheapest edge on $\pi(r,u)$ also induces a minimum $u$-$v$-cut, in 
particular the community cut of $U'(u,r)$ is a minimum $u$-$v$-cut. 
We show now that the community cut inducing $U(u,v)$ is also a 
minimum $u$-$r$, i.e., that it separates $u$ and $r$. It follows that $U(u,v) = U'(u,r)$.

Suppose $r\in U(u,v)$. Since $v\notin U'(u,r)$ we get the situation of 
Lemma~\ref{lem:intersecBehavior}(2i) which yields $U'(u,r)\subseteq U(u,v)$. 
However, since $U'(u,r)$ also separates $u$ and $v$ it must hold $|U(u,v)| = |U'(u,r)|$, 
which contradicts the assumption $r\in U(u,v)$.

\emph{Proof of (iii)}:
If all edges on the path from $r$ to $u$ are more expansive than 
the cheapest edge on the path from $r$ to $v$, the community 
cut inducing $U(u,v)$ does not separate $u$ and $r$, i.e., it is also 
a minimum $r$-$v$-cut. Hence, it follows that $U(u,v) = R(r,v)$, 
since vice versa $u \in R(r,v)$ due to $\lambda(r,v) = \lambda(u,v)$.
\qed
\end{proof}

After precomputing $M(G)$, which includes the construction 
of $\T(G)$ (we denote this by $M(G)\supset \T(G)$), 
the following tools allow to efficiently analyze the SC structure of $G$ with respect to 
different SCs that are already known, for example,
from the cut clustering algorithm of Flake et al.\ or the set~$M(G)$.
The key is Lemma~\ref{lem:shape}.
It limits the shape of arbitrary SCs to subtrees in~$\T(G)$,
which admits an efficient enumeration of disjoint SCs 
by a depth-first search (DFS), as we will see in the following.
\newcommand{\lemShape}{
The subgraph~$\T[T]$ 
induced by a SC~$T$ 
in~$\T(G)$ is connected.  
}
\begin{lemma}
  \label{lem:shape}
  \lemShape
\end{lemma}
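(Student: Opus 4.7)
The plan is to assume toward contradiction that $\T[T]$ is disconnected and then derive a contradiction by direct application of Lemma~\ref{lem:intersecBehavior}, exploiting the fact that each edge of $\T(G)$ represents a source community (Theorem~\ref{cor:cutTree}).

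By Lemma~\ref{lem:charSC} we may write $T = \SC(s, O)$ for some source $s \in T$ and some opponent set $O \subseteq V \setminus T$; if $|T| = 1$ there is nothing to prove, so assume $|T| \geq 2$. Suppose for contradiction that $\T[T]$ is disconnected. Then there exists a vertex $v \in T$ that lies in a different connected component of $\T[T]$ than $s$, so the unique path in $\T(G)$ from $s$ to $v$ contains at least one vertex outside~$T$. Traversing this path starting at $s$, let $e = \{a,b\}$ be the first tree edge with $a \in T$ and $b \notin T$. Removing $e$ from $\T(G)$ splits the tree into two subtrees; let $A$ denote the vertex set of the one containing $a$. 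Since the sub-path from $s$ to $a$ remains in $T$ and in particular does not cross $e$, we have $s \in A$, while both $b$ and $v$ lie in $V \setminus A$.

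By Theorem~\ref{cor:cutTree}, the tree edge $e$ is directed to whichever of $\SC(a,b)$ and $\SC(b,a)$ is the smaller one, and the corresponding community cut is exactly $(A, V \setminus A)$. We distinguish the two possible orientations. If $A = \SC(a,b)$, apply Lemma~\ref{lem:intersecBehavior}(2i) with $S_1 = T$, $s_1 = s$, $T_1 = O$ and $S_2 = A$, $s_2 = a$, $T_2 = \{b\}$: the hypotheses $T_2 \cap S_1 = \{b\} \cap T = \emptyset$ and $s_1 = s \in A = S_2$ are met, so the lemma yields $T \subseteq A$, contradicting $v \in T \setminus A$. If instead $V \setminus A = \SC(b, a)$, apply Lemma~\ref{lem:intersecBehavior}(1) with $S_1 = T$, $s_1 = s$ and $S_2 = V \setminus A$, $s_2 = b$: since $s \in A$ forces $s \notin V \setminus A$ and since $b \notin T$, we have $\{s, b\} \cap (T \cap (V \setminus A)) = \emptyset$, so the lemma delivers $T \cap (V \setminus A) = \emptyset$, again contradicting $v \in T \cap (V \setminus A)$.

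The only nontrivial step is the choice of exit edge: one must initiate the traversal at the source $s$, so that $s$ lands on the $A$-side after $e$ is removed. With that setup, both possible orientations of the Gomory--Hu edge fall cleanly into the appropriate case of Lemma~\ref{lem:intersecBehavior}, and the vertex $v$ witnessing disconnectedness contradicts the derived nesting or disjointness. Starting the path at an arbitrary vertex of $T$ instead of $s$ would leave the cases in which $s \in V \setminus A$ unresolved by Lemma~\ref{lem:intersecBehavior} and would require a more delicate argument such as a direct submodularity computation on the four-way partition $\{T \cap A, T \setminus A, A \setminus T, V \setminus (A \cup T)\}$.
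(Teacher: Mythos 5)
Your proof is correct. The key facts you use are exactly the ones the paper relies on -- that every tree edge represents a community cut (Theorem~\ref{cor:cutTree}) and that SCs interact with other SCs only as described in Lemma~\ref{lem:intersecBehavior} -- but your organization of the argument is genuinely different and tighter. The paper argues constructively in the rooted tree: it fixes the source, treats ancestors of the source and all other vertices separately, routes the latter through the nearest common ancestor, and applies Lemma~\ref{lem:intersecBehavior}(2i) and (1) in three separate sub-arguments to show that every tree path toward the source stays inside $T$. You instead assume disconnectedness, take the \emph{first} edge $e=\{a,b\}$ at which the tree path from the source $s$ to a separated vertex $v$ leaves $T$, and observe that whichever way $e$ is oriented, one clean application of Lemma~\ref{lem:intersecBehavior} (case (2i) if the represented SC is the side $A\ni s$, case (1) if it is the side $V\setminus A\ni v$) contradicts $v\in T$. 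Your closing remark is also on point: anchoring the traversal at the source is what guarantees $s\in A$, which is precisely the hypothesis needed to invoke (2i) with $s_1\in S_2$ in one case and to verify $\{s_1,s_2\}\cap(S_1\cap S_2)=\emptyset$ in the other; the paper buys the same guarantee by always walking toward the source. What your version gives up is only the paper's additional (and not strictly part of the lemma statement) observation that for \emph{maximal} SCs whole subtrees hanging off $\T[T]$ are contained in it; what it buys is a single uniform case split in place of the ancestor bookkeeping.
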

\begin{proof}
If $T$ is represented by an edge in $\T(G)$ the assertion obviously holds.
Hence, assume $T$ is an opposite SC or another arbitrary SC.
In order to prove the connectivity of $\T[T]$,
we first focus on the predecessors of $t$.
Let $p$ denote a predecessor of $t$ with $p\in \T[T]$ 
and $q$ a successor of $p$ on $\pi(p,t)$. We prove that $q\in\T[T]$.
Assume $q\notin \T[T]$. Since $t$ is a successor 
of $q$, $t$ is in the SC $Q(t,q)$. 
According to Lemma~\ref{lem:intersecBehavior}(2i), 
however, it follows that $T\subseteq Q$, which contradicts $p\in T$.

In a second step we consider the remaining vertices.
Let $u$ be a vertex that is no predecessor of $t$.
Let $r$ denote the nearest common predecessor of $u$ and $t$.
We first show, that (i) if $u\in \T[T]$, then $r\in \T[T]$.
Then we suppose there is also a predecessor $p\not= r$ 
of $u$ on $\pi(u,r)$ and prove 
(ii) that if $u\in \T[T]$, then $p\in \T[T]$.
Together with the observation on the predecessors of $t$, 
this ensures the connectivity of $\T[T]$.

\emph{Proof of (i)}:
If $r = t$, we are done.
Assume $r\not= t$ and $r\notin \T[T]$
Since $t$ is a successor of $r$, $t$ is in the SC $Q(t,r)$, 
while $u\notin Q(t,r)$.
According to Lemma~\ref{lem:intersecBehavior}(2i), 
however, it follows that $T\subseteq Q$, which contradicts $u\in T$.

\emph{Proof of (ii)}:
From (i) we already know that $r\in \T[T]$.
Assume $p\notin \T[T]$ and consider the SC $P(p,r)$. Is is $t \notin P$, 
and hence, according to Lemma~\ref{lem:intersecBehavior}(1)
$P$ and $T$ are disjoint, contradicting $u\in T$, since $u\in P$.

If $T$ is maximal and
$u\in \T[T]$ is a successor of~$t$ 
or a successor of a predecessor $p$ of~$t$, $u \notin \pi(p,t)$, 
we can further show that
the subtree rooted in~$u$ is in~$\T[T]$.

This is obviously holds if $T$ is represented by a tree edge.
If $T$ is a maximal opposite SC, let $(t,s)\in E_\T$ denote the 
edge $T$ is associated with. 
Let $u\notin S(s,t)$ denote a successor of $t$ or a successor 
of a predecessor $p$ of $t$ with $u\notin \pi(p,t)$. 
With $u \in \T[T] \equiv T(t,s)$ and  $s\notin U(u,p)$, with $U(u,p)$ 
corresponding to the subtree routed in $u$, we get the situation 
in Lemma~\ref{lem:intersecBehavior}(2), and it follows $U(u,p)\subseteq T(t,s)\equiv \T[T]$.
\qed
\end{proof}

\vspace*{-1.5ex}
\subsubsection{Maximal SC Clustering for one SC.}
Given an arbitrary SC $S$, the first tool returns a 
clustering~$\C(S)$ of $G$ that contains $S$, consists of SCs
and is maximum
in the sense that each clustering
that also consists of $S$ and further SCs is hierarchically nested in $\C(S)$.
This implies that $\C(S)$ is the unique
maximal clustering among all clusterings consisting of~$S$ and further SCs.
We call $\C(S)$ the \emph{maximal SC clustering} for~$S$.

\newcommand{\theoMaxSCClus}{
Let~$S$ denote a SC in $G$. The unique maximal SC clustering for~$S$ can be determined in $O(n)$ time after preprocessing $M(G) \supset \T(G)$.}
\begin{theorem}
  \label{theo:maxSCClus}
  \theoMaxSCClus
\end{theorem}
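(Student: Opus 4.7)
The plan is to characterize $\C(S)$ as $\{S\}$ together with the inclusion-maximal members of $\{F_v : v \in V\setminus S\}$, where $F_v := \SC(v,S)$ (well-defined by Lemma~\ref{lem:charSC}), and then to read this collection off $\T(G)$ together with the opposite SCs stored in $M(G)$.

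First I would prove the characterization. Any two $F_{v_1}, F_{v_2}$ share the common opponent $S$ and are each disjoint from $S$, so Lemma~\ref{lem:intersecBehavior}(1) forces every non-empty intersection to contain one of the sources, and case~(2i) then forces nesting; hence the inclusion-maximal $F_v$ are pairwise disjoint and, together with $S$, partition $V$ into SCs. For the maximum property, any SC clustering $\C' \ni S$ has each other cluster $C' \in \C'$ as an SC disjoint from $S$, and by Lemma~\ref{lem:comNest} applied to any source $s'$ of $C'$ we get $C' \subseteq F_{s'} \subseteq$ some maximal $F_v \in \C(S)$, so $\C' \leq \C(S)$. Uniqueness then follows by mutual refinement.

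For the $O(n)$ algorithm I would use Lemma~\ref{lem:shape}: $S$ is a connected subtree of $\T(G)$, whose boundary consists of finitely many downward edges $(s,c)$ (with $s \in S$ the parent of $c \notin S$) and at most one upward edge $(p, t_S)$. Starting from $B = S$, the algorithm processes boundary edges of the current barrier $B$ one at a time. For a downward edge $(s,c)$, the subtree rooted at $c$ equals the edge-SC $\SC(c,s) \in M(G)$; since any SC of $c$ disjoint from $S$ must live inside the branch component of $c$ in $\T\setminus S$ by Lemma~\ref{lem:shape}, this subtree coincides with $F_c$, so I add it as a cluster and absorb it into $B$. For the upward edge $(p, t_S)$, the precomputed opposite SC $P \in M(G)$ equals $F_p$, using the identity $\lambda(p, t_S) = \lambda(p, S)$ that holds because every minimum $p$-$t_S$-cut in $\T$ separates $p$ from the entire subtree below $t_S$, which contains $S$. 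I then absorb $P$ into $B$ and iterate on the new boundary until every vertex is assigned.

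Each vertex enters exactly one cluster and each cluster is enumerated in time proportional to its size (subtrees by DFS of $\T$, opposite SCs by reading their precomputed vertex lists), so the total work is $O(n)$ on top of the $M(G) \supset \T(G)$ preprocessing. The main obstacle is justifying the upward step at each iteration, namely that the chosen opposite SC is genuinely inclusion-maximal among the $F_v$ and is not straddled by a larger SC in $\C(S)$. This is handled by combining Theorem~\ref{lem:numMaxSCs}(i), which identifies $\SC(\cdot, t_S)$ with a specific opposite SC on the path to $t_S$, with Lemma~\ref{lem:intersecBehavior}: any hypothetical $F_{v'} \supsetneq P$ would, via Theorem~\ref{lem:numMaxSCs}(i), have to be the opposite SC of a still cheaper or closer path edge, but such an opposite SC necessarily omits the subtree rooted at that edge's child and therefore cannot contain the top vertex of $P$, a contradiction.
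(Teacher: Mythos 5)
Your characterization of $\C(S)$ as $\{S\}$ together with the inclusion-maximal sets $F_v=\SC(v,S)$ is sound: Lemma~\ref{lem:intersecBehavior} makes the family $\{F_v\}$ laminar, and Lemma~\ref{lem:comNest} gives the maximality of the resulting clustering. The downward-edge step also closes: for a child $c$ of a vertex $s\in S$ one gets $\SC(c,s)\subseteq F_c$ from Lemma~\ref{lem:comNest} and $F_c\subseteq\SC(c,s)$ from Lemma~\ref{lem:intersecBehavior}(2i) because $s\in S$ forces $s\notin F_c$. This is a slightly different packaging of what the paper does via Lemma~\ref{lem:Source}.

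The gap is exactly at the step you flag as "the main obstacle," and your sketched repair does not close it. You must show that no SC disjoint from $S$ properly contains the opposite SC $P=\SC(p,t_S)$ of the upward edge. This cannot come from a generic maximality principle for sets of the form $\SC(x,y)$: on the path $a\text{--}b\text{--}c$ with $c(a,b)=1$ and $c(b,c)=2$, the set $\{a,b\}$ is a SC (with source $b$) that properly contains $\SC(a,c)=\{a\}$ yet avoids $c$. So the argument must exploit that $(p,t_S)$ is a tree edge. Your route is to apply Theorem~\ref{lem:numMaxSCs}(i) to a hypothetical $F_{v'}\supsetneq P$, but $F_{v'}=\SC(v',S)$ is a SC with respect to a \emph{set}, not a vertex pair of $\T(G)$, so that theorem does not identify it with an opposite SC of any path edge. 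The only available bridge, $F_{v'}\subseteq\SC(v',t_S)$ via Lemma~\ref{lem:comNest}, bounds $F_{v'}$ from above, and the resulting chain $\SC(v',t_S)\supseteq F_{v'}\supsetneq P=\SC(p,t_S)$ is not contradictory by itself, since SCs with respect to the common opponent $t_S$ may be properly nested (as the example above shows). A second instance of the same circularity appears in later iterations: the stored opposite SC of the edge entering the root $r_i$ of $\T[S_i]$ is $\SC(p_i,r_i)$ with $r_i\in S_i$ rather than in $S$, so even the identity $\SC(p_i,r_i)=F_{p_i}$ already presupposes that $F_{p_i}$ avoids $S_i$, i.e., the maximality of $S_i$ you have not yet established. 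The paper closes this hole with the inductive Lemma~\ref{lem:Source}, which pins every cluster down as $\SC(c,s)$ for its source $c$ and the single source $s$ of $S$ by exploiting that each cluster is a tree-edge SC or an opposite SC; your write-up needs an argument of this kind (or a direct induction along the chain $S_0,S_1,\dots,S_k$) before the $O(n)$ bound can be claimed for a correct output.
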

\newcommand{\lemSource}{
Each SC in $\C(S)\setminus \{S\}$ is a SC with respect to the source of~$S$.
}
%
The maximal SC clustering for $S=: S_0$ can be determined 
by the following construction,
 which directly implies a simple algorithm.
Let $r$ denote the root of $\T(G) =: \T_0$ and $\T[S_0]$ the subtree 
induced by $S_0$ in~$\T_0$ (Lemma~\ref{lem:shape}).
Deleting $\T[S_0]$ decomposes $\T_0$ into connected components, 
each of which representing a SC, apart from the one 
containing $r$ if $r\notin S_0$. 
If $r\in S_0$, we are done. Otherwise, let $\T_1$ denote the component containing $r$
and $r_0$ the root of $\T[S_0]$. Obviously is $p_0\in \T_1$ for $(p_0,r_0)\in E_\T$ and
$\SC(p_0,r_0) =: S_1$
induces a subtree $\T[S_1]$ in $\T_1$.
Thus, $S_1$ and $\T_1$ adopt the roles of~$S_0$ and~$\T_0$.

Continuing in this way, we finally end up with a SC $S_k$ containing $r$, such that 
deleting $\T[S_k]$ yields only SCs.
The resulting clustering $\C(S)$ 
consists of $S=S_0$, $S_i$, $i = 1, \dots, k$, 
and the remaining SCs resulting from the decompositions of $\T_0, \dots, \T_k$.

The proof of the maximality of $\C(S)$ is based on the following lemma.

\begin{lemma}
  \label{lem:Source}
  \lemSource
\end{lemma}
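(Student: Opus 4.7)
The plan is to prove, for each $C \in \C(S) \setminus \{S\}$, the identity $C = \SC(\sigma_C, s)$. Here $s$ is the source of $S$ and $\sigma_C \in C$ is the natural source from the construction: $\sigma_C = s'$ when $C$ is the hanging component $\SC(s',t)$ attached via a tree edge $(t,s')$ with $t\in S_i$, and $\sigma_C = p_{j-1}$ when $C = S_j = \SC(p_{j-1},r_{j-1})$. Writing $\tau_C$ for the matching construction-opponent ($t$ or $r_{j-1}$), the construction gives $C = \SC(\sigma_C,\tau_C)$. The inclusion $C \subseteq \SC(\sigma_C,s)$ is immediate from Lemma~\ref{lem:comNest} applied with $\{s\}$, since $s \in S$ and $\C(S)$ being a partition imply $s \notin C$. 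The reverse inclusion reduces to $\lambda(\sigma_C,s) = c(C,V\setminus C) = \lambda(\sigma_C,\tau_C)$: the cut $(C,V\setminus C)$ separates $\sigma_C$ from both $\tau_C$ and $s$, giving the free bound $\lambda(\sigma_C,s) \le c(C,V\setminus C)$, and once equality is in place, $(C,V\setminus C)$ is a minimum $\sigma_C$-$s$-cut with $\sigma_C$-side of size $|C|$, which combined with $C \subseteq \SC(\sigma_C,s)$ forces $\SC(\sigma_C,s)=C$.

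For the lower bound $\lambda(\sigma_C,s) \ge c(C,V\setminus C)$, the plan is a cutting argument. Take any cut $(A,V\setminus A)$ with $\sigma_C\in A$, $s \notin A$, and iterate over the spine clusters $S_\ell$ ($\ell = 0,1,\dots,k$), cleaning $A$ as follows. Let $U_\ell := A\cap S_\ell$ and let $\sigma_\ell$ be the source of $S_\ell$ ($\sigma_0 = s$, and $\sigma_\ell = p_{\ell-1}$ for $\ell \ge 1$). If $\sigma_\ell \notin A$, replace $A$ by $A \setminus U_\ell$; if $\sigma_\ell \in A$, replace $A$ by $A \cup (S_\ell \setminus U_\ell)$. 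The SC property of $S_\ell$, applied to whichever of $U_\ell$ or $S_\ell\setminus U_\ell$ avoids $\sigma_\ell$ (and extended to the boundary case $U = S_\ell\setminus\{\sigma_\ell\}$, which also holds by the minimality of $S_\ell$ among $\sigma_\ell$-sides of minimum cuts), makes each replacement strictly reduce the cut cost. Denote the resulting cut by $A^*$; then $c(A^*,V\setminus A^*) \le c(A,V\setminus A)$, every $S_\ell$ is either wholly contained in or wholly disjoint from $A^*$, and in particular $S_0 \cap A^* = \emptyset$ because $s \notin A$.

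Now split on whether $\tau_C \in A^*$. If $\tau_C \notin A^*$, then $A^*$ separates $\sigma_C$ from $\tau_C$, so $c(A,V\setminus A) \ge c(A^*,V\setminus A^*) \ge \lambda(\sigma_C,\tau_C) = c(C,V\setminus C)$, and we are done. Otherwise $\tau_C \in A^*$, which happens exactly when the spine cluster $S_{\ell^*}$ containing $\tau_C$ has $\sigma_{\ell^*} \in A$, so that the cleanup engulfs $S_{\ell^*}$ into $A^*$. Here I would invoke the inductive identity $S_{\ell^*} = \SC(\sigma_{\ell^*},s)$, which yields $\lambda(\sigma_{\ell^*},s) = c(S_{\ell^*}, V\setminus S_{\ell^*})$. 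Since $A^*$ separates $\sigma_{\ell^*} \in S_{\ell^*} \subseteq A^*$ from $s \notin A^*$, we get $c(A^*,V\setminus A^*) \ge c(S_{\ell^*},V\setminus S_{\ell^*})$; and because $(S_{\ell^*},V\setminus S_{\ell^*})$ itself separates $\tau_C \in S_{\ell^*}$ from $\sigma_C \notin S_{\ell^*}$, this is in turn at least $\lambda(\sigma_C,\tau_C) = c(C,V\setminus C)$, closing the chain.

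The inductive hypothesis above is that every earlier spine cluster has already been identified as a SC with opponent $s$. The plan is therefore to carry out the proof in the interleaved order $S_0$ (given), hanging components off $\T[S_0]$, then $S_1$, hanging off $\T[S_1]$, and so on up to $S_k$ and its hanging components, always establishing $S_j = \SC(p_{j-1},s)$ immediately before the hanging components attached to $\T[S_j]$. In the two base cases (hanging off $\T[S_0]$, and $S_1$) we have $\ell^* = 0$, and since $\sigma_0 = s \notin A$ by assumption, only the $\sigma_\ell\notin A$ branch ever triggers at level $\ell^*$, so no inductive hypothesis beyond the SC property of $S_0$ is needed. The main obstacle is precisely the Case B situation: when the cleaned cut no longer separates $\sigma_C$ from $\tau_C$, the inductive identity on the spine turns this apparent failure into a chain of lower bounds routed through $c(S_{\ell^*},V\setminus S_{\ell^*})$, and it is exactly this routing that forces the interleaved induction between spine clusters and hanging components.
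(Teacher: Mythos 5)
Your proof is correct, but it takes a genuinely different route from the paper's. The paper identifies each $C\in\C(S)\setminus\{S\}$ with $\SC(c,s)$ purely via the intersection behaviour of SCs: it first obtains $\SC(c,s)\cap S=\emptyset$ from Lemma~\ref{lem:intersecBehavior}(1), then applies case (2ii) of that lemma to the pair $C=\SC(c,v)$ and $\SC(c,s)$ when the construction-opponent $v$ lies in $S$, and reduces the case $v\notin S$ by an induction along the spine showing that $C$ is already a SC with respect to the source of the cluster containing $v$. You instead re-derive the underlying min-cut identity $\lambda(\sigma_C,s)=c(C,V\setminus C)$ from scratch: the upper bound is free, and the lower bound comes from uncrossing an arbitrary $\sigma_C$-$s$-cut against the spine clusters via the SC exchange inequality, with the inductive identity $S_{\ell^*}=\SC(\sigma_{\ell^*},s)$ rescuing the case where the cleaned cut swallows the spine cluster containing $\tau_C$; both inductions are anchored the same way, since your Case~B cannot fire at level $0$ because $s\notin A$. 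What the paper's route buys is brevity---Lemma~\ref{lem:intersecBehavior}(2ii) packages all the cut arithmetic once and for all; what yours buys is a self-contained, quantitative conclusion (namely that $(C,V\setminus C)$ is itself a minimum $\sigma_C$-$s$-cut of cost $\lambda(\sigma_C,\tau_C)$) at the price of a longer, more delicate interleaved induction. Two details you handle correctly but should keep explicit in a write-up: the boundary case $U=S_\ell\setminus\{\sigma_\ell\}$ of the exchange step needs the minimality of the source side of a community cut (Lemma~\ref{lem:charSC}) rather than the bare SC definition, and the order-independence of the cleaning relies on the spine clusters being pairwise disjoint with $\sigma_\ell\in S_\ell$.
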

\begin{proof}
Let~$c$ denote the source of a SC $C\in \C(S)$ 
and~$s$ the source of~$S$.
Recall that~$C$ is a maximal SC due to the construction of~$\C(S)$.
 Then, $C' :=\SC(c,s)$ 
and~$S$ are disjoint
according to Lemma~\ref{lem:intersecBehavior}(1), 
since $\{c,s\}\cap (C'\cap S) = \emptyset$.

If $C$ is a SC with respect to a vertex $v\in S$ we 
get $C = C'$ according to Lemma~\ref{lem:intersecBehavior}(2ii).

If $C$ is a SC with respect to a vertex $v\notin S$,
let $S'$ denote the cluster containing~$v$. With the 
same arguments as before, $C$ is a SC with respect 
to the source $s'$ of $S'$.
By induction and due to the construction,~$S'$
is a SC with respect to $s$
and $s'$ is on the path between~$c$ and~$s$ in $\T(G)$. 
If  $C'$ contained $s'$, then
the edge in $\T(G)$ indicating $C'$ would also indicate 
the SC of $s'$ with respect to $s$, which is $S'$. 
This contradicts the fact that $c \notin S'$.
Hence, $C'$ does not contain $s'$ and again 
by Lemma~\ref{lem:intersecBehavior}(2ii) it is $C' = C$.
\qed
\end{proof}
Let $Q$ denote an arbitrary SC with source $q$ that does not intersect $S$, 
let $s$ denote the source of $S$,
and let~$C$ denote the SC in $\C(S)\setminus \{S\}$ with $q\in C$.
Since~$C$ is a SC with respect to $s\notin Q$ (Lemma~\ref{lem:Source}) and $q\in Q\cap C$,
it is $Q\subseteq C$, according to Lemma~\ref{lem:intersecBehavior}(2i).
Thus, each SC not intersecting $S$ is nested in a cluster in $\C(S)$.

For the running time we assume that $S$ is given in a structure 
that allows to check the membership of a vertex in time $O(1)$.
Then identifying all clusters in $\C(S)$ (which are subtrees) 
by applying a DFS\footnote{This induces a rooted subtree 
independent from the orientation in $\T(G)$.} starting from the first 
vertex found in each cluster 
can be done in $O(n)$ time,
since checking if a visited vertex is still in $S_i$ takes constant time for
$i = 1, \dots, k$ (recall, that we store the opposite SCs in a matrix).
The remaining subtrees share their leaves with $\T(G)$.

\vspace*{-3ex}
\subsubsection{Overlay Clustering for $k$ disjoint SCs.}
Given $k$ disjoint arbitrary SCs $S_1, \dots, S_k$,
the second tool returns a clustering $\C(S_1,\dots, S_k)$ of~$G$ that
contains $S_1,\dots S_k$, is nested in each maximal SC clustering $\C(S_1), \dots, \C(S_k)$
and is maximum in the sense that each clustering that consists of SCs and
also contains $S_1,\dots, S_k$ is hierarchically nested in $\C(S_1,\dots, S_k)$.
Basically, according to the construction described below,
$\C(S_1,\dots, S_k)$ is
the unique maximal clustering among all clusterings that are nested in the maximal
SC clusterings $\C(S_1), \dots, \C(S_k)$. 
The further properties result
from the maximality of the SC clusterings, as
for each $\C(S_j)$ and each arbitrary SC $S$ that does not intersect
$S_1, \dots, S_k$ (or equals a given SC) there exists a cluster
$C\in \C(S_j)$ with $S\subseteq C$.
Note that the clusters in $\C(S_1,\dots S_k)\setminus \{S_1,\dots, S_k\}$
are not necessarily SCs.
We call $\C(S_1,\dots, S_k)$ the \emph{overlay clustering} for $S_1,\dots S_k$.

\newcommand{\theoMaxSCSuperimp}{
Let~$S_1, \dots, S_k$ denote disjoint SCs in $G$. The unique 
overlay clustering for $S_1,\dots,S_k$
can be determined in $O(kn)$ time after 
preprocessing $M \supset \T(G)$.}
\begin{theorem}
  \label{theo:maxSCSuperimp}
  \theoMaxSCSuperimp
\end{theorem}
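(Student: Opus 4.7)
The plan is to construct $\C(S_1,\dots,S_k)$ as the common refinement of the individual maximal SC clusterings $\C(S_1),\dots,\C(S_k)$. More precisely, after precomputing $M(G)\supset \T(G)$, Theorem~\ref{theo:maxSCClus} yields each $\C(S_i)$ in $O(n)$ time (assuming each $S_i$ is given in a structure supporting constant-time membership queries, as required there). I then define $\C(S_1,\dots,S_k)$ as the partition whose clusters are the nonempty intersections $C_1\cap\dots\cap C_k$ with $C_j\in\C(S_j)$. This is tautologically the coarsest partition refining every $\C(S_j)$, so it is uniquely determined and automatically satisfies the second requirement of the theorem (nestedness in each $\C(S_i)$).

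The first check is that each $S_i$ is a cluster of $\C(S_1,\dots,S_k)$. Pick $v\in S_i$. In $\C(S_i)$ the cluster containing $v$ is exactly $S_i$, so the cluster of $v$ in the intersection partition is contained in $S_i$. Conversely, for $j\neq i$, the SC $S_i$ is disjoint from $S_j$, so by the nesting property used in the proof of Theorem~\ref{theo:maxSCClus} (which appeals to Lemma~\ref{lem:intersecBehavior}(2i) via Lemma~\ref{lem:Source}) the entire set $S_i$ sits inside a single cluster of $\C(S_j)$. Hence all of $S_i$ lies in the same intersection class as $v$, and equality follows. Maximality then follows from the same nesting principle: if $\C'$ is any clustering of SCs containing $S_1,\dots,S_k$, every $C'\in\C'$ is either some $S_i$, already shown to be contained in one cluster of each $\C(S_j)$, or a SC disjoint from every $S_i$, in which case Theorem~\ref{theo:maxSCClus} places $C'$ inside one cluster of each $\C(S_j)$. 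Either way $C'$ is contained in a single cluster of $\C(S_1,\dots,S_k)$, and uniqueness is immediate from maximality.

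For the running time, the $k$ calls to Theorem~\ref{theo:maxSCClus} cost $O(kn)$ in total. To assemble the intersection partition, I would label each vertex by the $k$-tuple of cluster identifiers it receives in $\C(S_1),\dots,\C(S_k)$ and group equal labels using $k$ bucket-sort passes, for a total of $O(kn)$ additional time. The main obstacle, and essentially the only nontrivial point, is verifying that each $S_i$ is exactly (not merely contained in) an intersection class, since without this the construction would fail the first requirement of the theorem; once that equality and the nesting property of Theorem~\ref{theo:maxSCClus} are in hand, the remaining verification and the implementation are bookkeeping.
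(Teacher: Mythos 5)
Your proposal is correct and conceptually identical to the paper's: the paper also defines $\C(S_1,\dots,S_k)$ as the unique maximal clustering nested in all of $\C(S_1),\dots,\C(S_k)$ (i.e.\ the common refinement by nonempty intersections) and derives both the containment of each $S_i$ and the overall maximality exactly from the maximality property of each individual $\C(S_j)$ (every SC disjoint from $S_j$ is nested in a cluster of $\C(S_j)$), just as you do. The only genuine divergence is the assembly step. The paper computes the refinement incrementally during the $k$ DFS traversals of $\T(G)$, using a coloring scheme that relies on Lemma~\ref{lem:shape}: every cluster is a subtree of $\T(G)$, and the intersection of two subtrees is again a subtree, so the intersection classes can be carved out on the fly by watching for color changes along the DFS. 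You instead label each vertex with its $k$-tuple of cluster identifiers and group equal tuples by $k$ bucket-sort passes. Both run in $O(kn)$; the paper's variant additionally yields the intersection classes as rooted subtrees of $\T(G)$ (convenient for further queries in their framework), while yours is more generic in that the refinement step needs no tree structure at all and would work for any family of precomputed partitions. One small attribution note: the fact you invoke --- that an SC disjoint from $S_j$ lies inside a single cluster of $\C(S_j)$ --- is not part of the statement of Theorem~\ref{theo:maxSCClus} itself but is established in the discussion following Lemma~\ref{lem:Source}; it is the right fact to use.
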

%
The overlay clustering for $S_1,\dots, S_k$ can be determined 
by the following inductive construction, which directly implies a simple algorithm.
We first compute the maximal SC clustering $\C(S_1)$ and color the 
vertices in each cluster, using different colors for different clusters.
Now consider the overlay clustering $\C(S_1,\dots, S_i)$ for the first $i$ maximal 
SC clusterings and color the vertices in $S_{i+1}$, which is nested in a cluster of $\C(S_1,\dots, S_i)$,
with a new color.
During the computation of $\C(S_{i+1})$, we then construct 
the intersections of each newly found cluster $C$ with 
the clusters in $\C(S_1,\dots, S_i)$.
To this end we exploit that the intersection of two subtrees in a tree is again
a subtree. 
Hence, the clusters in $\C(S_1,\dots, S_i,S_{i+1})$ will be subtrees in $\T(G)$, 
since the clusters in $\C(S_1),\dots, \C(S_i)$ and $\C(S_{i+1})$ are 
subtrees in $\T(G)$ by Lemma~\ref{lem:shape}.

Let $r'$ denote the first vertex found in $C$ during the computation of $\C(S_{i+1})$.
We mark $r'$ as root of a new cluster in $\C(S_1,\dots, S_i,S_{i+1})$ 
and choose a new color~$x$ for~$r'$, 
besides the color it already has in $\C(S_1,\dots, S_i)$.
When constructing~$C$ (by applying a DFS), we assign 
the current color~$x$ to all vertices 
visited by the DFS as long as the underlying color 
in $\C(S_1,\dots, S_i)$ does not change. 
Whenever the DFS visits a vertex $r''$ (still in C) with a new underlying 
color,
we chose a new color~$y$ for~$r''$ and 
mark $r''$ as root of a subtree of a new cluster in $\C(S_1,\dots, S_i,S_{i+1})$. When the DFS passes $r''$ 
on the way back to the parent\footnote{The predecessor adjacent to $r''$ in the rooted subtree induced by the DFS.} $p$ of $r''$, the color of $p$ in $\C(S_1,\dots, S_i,S_{i+1})$ becomes the current color again.
Continuing in this way yields a coloring that indicates the intersections of~$C$ with $\C(S_1,\dots, S_i)$.
Repeating this procedure for all clusters in $\C(S_{i+1})$ finally yields $\C(S_1,\dots,S_{i+1})$.
The running time is in $O(kn)$, since we just apply~$k$ computations of maximal SC clusterings.

\vspace{-3ex}
\subsubsection{Example.}
We extract two of the many faces of the SC structure of the weighted co-appearance network (called "lesmis") of the characters in the novel Les Miserables~\cite{k-tsgb-93}. 
Figure~\ref{fig:tree} shows the cut tree $\T(\text{"lesmis"})$, the root~$r$ is depicted as filled square.
Figure~\ref{fig:maxClus} shows the maximal SC clustering $\C(R_1)$ for the SC $R_1$ (filled vertices in squared box).
The subtree $\T[R_1]$ induced by $R_1$ in $\T(\text{"lesmis"})$ is indicated by filled vertices in Figure~\ref{fig:tree}.
Since $r\in R_1$, deleting $\T[R_1]$ immediately decomposes $\T(\text{"lesmis"})$ into the unframed singleton SCs and the round framed SCs shown in Figure~\ref{fig:maxClus}. 
The SC~$R_1$ is the larger of the only two non-singleton clusters in the best cut clustering (with respect to \emph{modularity}~\cite{ng-fecsn-04}) found by the cut clustering algorithm of Flake et al. 
On the other hand, $R_1$ is
the smallest  reasonable SC that was found by the cut clustering algorithm containing~$r$. The next smaller SC in the hierarchy that contains~$r$ consists of only three vertices.
The second non-singleton cluster besides~$R_1$ in the best cut clustering is also in~$\C(R_1)$, namely~$A$. Nevertheless, $\C(R_1)$ is not nested in any clustering of the hierarchy. This is, we found a new clustering that contains all non-singleton clusters of the best cut clustering but far less unclustered vertices.
Due to the maximality of $\C(R_1)$, there is also no clustering with less singletons that consists of SCs and contains~$R_1$.

Figure~\ref{fig:overlay} shows the overlay clustering $\C(S_1,\dots, S_6,R_2)$ with 
$S_1,\dots, S_6$ defined by the non-singleton subtrees of $r$ in $\T(\text{"lesmis"})$. The SC $R_2$ (filled vertices in squared box) has been computed additionally. It equals $\SC(r,T)$ with $T:= \bigcup_{i= 1}^{6} S_i$. 
If we consider the filled vertices in Figure~\ref{fig:overlay} as one cluster~$F := V\setminus T$, then $S_1,\dots, S_6$ together with $F$ represent the overlay clustering $\C(S_1,\dots, S_6)$. However, $\C(S_1,\dots, S_6)$ does not only consist of SCs since $F$ is no SC:
Observe that for the two vertices $v_1,v_2 \in F\setminus R_2$ there exists a vertex $u\in T$ (unfilled square) such that $\SC(v_i,u)\subseteq F$ ($i = 1,2$) is a singleton. Hence, according to Lemma~\ref{lem:intersecBehavior}(2i), any SC in~$F$, apart from $\{v_1\}$ and $\{v_2\}$, must be in $R_2$.  
This is, in contrast to $\C(S_1,\dots, S_6)$, the overlay clustering $\C(S_1,\dots, S_6,R_2)$ consists of SCs and
any clustering that also consists of SCs and contains~$S_1,\dots, S_6$ is nested in~$\C(S_1,\dots, S_6,R_2)$.

\begin{figure}[bt]
\centering
\subfigure[Basic cut tree $\T(G)$]{
		\label{fig:tree}
		\includegraphics[width = 4.1cm, page=1]{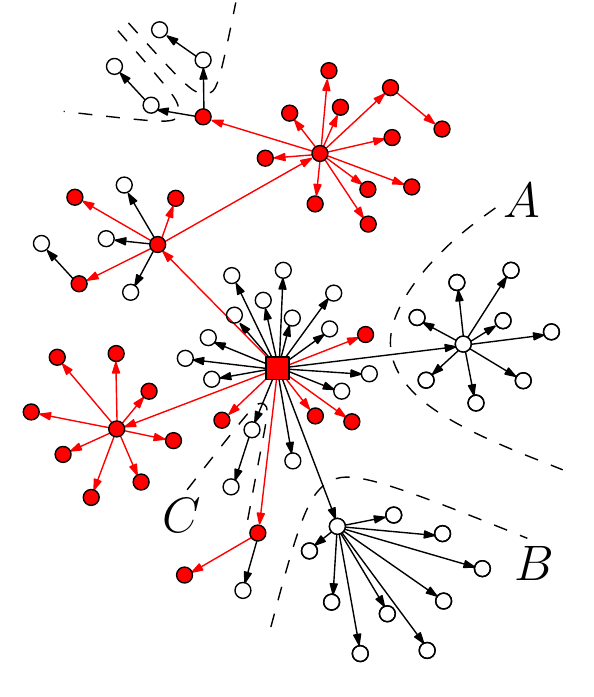}
	}
\hspace{-4ex}
	\subfigure[Maximal SC clustering]{
		\label{fig:maxClus}
		\includegraphics[width = 3.9cm, page=2]{lesmis_single.pdf}
	}
\hspace{-2ex}
	\subfigure[Overlay clustering]{
		\label{fig:overlay}
		\includegraphics[width = 3.9cm, page=3]{lesmis_single.pdf}
	}
	\vspace*{-2ex}
	\caption{Exemplary clusterings of the lesmis-network; $A,B,C$ appear in both clusterings.}
	\label{fig:example}
\vspace{-0ex}
\end{figure}

\vspace{-1ex}
\section{Conclusion}
\vspace{-1ex}
Based on minimum separating cuts and maximum flows, respectively, we characterized SCs, a special type of predominantly connected communities. We introduced a method for efficiently computing a complete hierarchy of clusterings consisting of SCs according to Flake et al.~\cite{ftt-gcmct-04}.
Furthermore, we exploited the structure of cut trees~\cite{gh-mtnf-61} in order to develop a framework that admits the efficient construction of maximal SC clusterings and overlay clusterings for given SCs, after precomputing at most $2(n-1)$ maximum flows.
In most cases, however, we expect only around $n-1$ maximum flows for the preprocessing, since the cases that cause the additional flow computations (when the opposite SC becomes invalid during the construction of the cut tree) are rare in practice. For the "lesmis" network in the previous example we needed only $n+3$ maximum flows with $n=77$.
We remark that a single maximal SC clustering for $S$ can be also constructed directly by iteratively computing maximal SCs of the vertices not in $S$ with respect to the source of $S$. However, in the worst case, this needs $|V\setminus S|$ flow computations, if the SCs are singletons or if they are considered in an order that causes many unnecessary computations of nested SCs.
In contrast, due to its short query times, our framework efficiently supports the detailed analysis of a networks's SC structure 
with respect to many different maximal SC clusterings and overlay clusterings.

\bibliographystyle{plain}

 \bibliography{predomCom_full}

\newpage
\begin{appendix}

\begin{bibunit}

\chapter*{Appendix}

\section{Proof of Theorem~\ref{theo:main} and Experimental Evaluation}
\label{app:CutClus}

Our simple approach for constructing a complete hierarchy of cut clusterings exploits the properties of cut-cost functions. The \emph{cut-cost function} $\omega_S$ of a set $S \subseteq V$ is a linear function in $\alpha$ that represents the costs of cut $(S, V_\alpha \setminus S)$ in $G_\alpha$ based on the costs of cut $(S, V\setminus S)$ in $G$ and the size of $S$.
\begin{eqnarray}
\omega_S : \mathbb R^+_0 & \longrightarrow & [c(S,V \setminus S), \infty) \subset \mathbb R^+_0 \nonumber \\
\omega_S (\alpha) & := & c(S,V \setminus S) + |S|\;\alpha \nonumber
\end{eqnarray}
In the context of clusters in a cut clustering hierarchy we call a cluster $C'$ a \emph{child} of a cluster $C$ if $C' \subset C$. The cluster $C$ is a \emph{parent} of $C'$. Hence, a cluster might have several children and several parents.
Let $C'$ denote a child of a cluster $C$, i.e., $|C'| < |C|$. Then, the slope of $\omega_{C'}$, which is given by $|C'|$, is positive but less than the slope of $\omega_C$ (cp. Figure~\ref{fig:cutfunc_app}). If $\omega_C$ and $\omega_{C'}$ intersect, let $\alpha^*$ denote the intersection point. Note that the functions of a parent and a child do not intersect in general.
For each $\alpha \in [0,\alpha^*)$ it is then $\omega_C(\alpha) < \omega_{C'}(\alpha)$, and we say that the parent $C$ \emph{dominates} the child $C'$. This is, $C'$ will never become a cluster in $G_{\alpha}$, as $C$ induces a smaller $u$-$t$-cut for each $u \in C'$, which prevents $C'$ from becoming a community of any vertex.  

\begin{wrapfigure}[8]{r}{.3\textwidth}
\vspace{-7ex}
\centering
\includegraphics[width = 2.5cm]{cost-functionESA}
\vspace{-1.5ex}
\caption{Intersecting cut-cost functions. }
\label{fig:cutfunc_app}
\end{wrapfigure}
If the child $C'$ contains the representative $r(C)$ of the parent $C$, we further observe that the child $C'$ \emph{dominates} the parent $C$ \emph{with respect to $r(C)$} for each $\alpha \in [\alpha^*,\infty)$. This is, $C$ will never become a cluster with representative $r(C)$ in $G_{\alpha}$, as $\omega_{C'}(\alpha) \leq\omega_{C}(\alpha)$ for each $\alpha \in (\alpha^*,\infty)$ and $|C'| < |C|$. Thus, $C'$ either induces a smaller $r(C)$-$t$-cut or, if the cut costs are equal, $C'$ induces a smaller cut side, which both prevents $C$ from becoming a community of $r(C)$.

The latter observation implies that the function of a cluster $C$ with representative $r(C)$ always intersects with the function of any child $C'$ of $C$ with $r(C)\in C'$. Otherwise, $C'$ would dominate $C$ wrt.\ $r(C)$ in the whole parameter range contradicting the fact that $C$ is a cluster with representative $r(C)$ in the hierarchy.

For two consecutive hierarchy levels $\C_i < \C_{i+1}$ we call $\alpha'$ the \emph{breakpoint} if CutC returns $\C_i$ for $\alpha'$ and $\C_{i+1}$ for $\alpha' - \epsilon$ with $\epsilon \rightarrow 0$.
A breakpoint between two hierarchy levels is in particular an intersection point of the cut-cost functions of two clusters $C' \subset C$. The simple idea of our parametric search approach is to compute relevant intersection points and check if they yield new clusterings. 

\vspace{2ex}
\rephrase{Theorem}{\ref{theo:main}}{\paraApp}

\begin{proof}
This proof constructively describes the steps of our parametric search approach and shows the correctness. The first step is the construction of $\alpha_m$. 
Formally, we define
$\alpha_m := \min_{C\in\C_j }\lambda_{C}$ with $\lambda_{C} := \max_{C'\in \C_i: C'\subset C}\{\alpha\mid \omega_{C}(\alpha) = \omega_{C'}(\alpha) \} $.
The notation $\lambda_{C}$ describes the maximum intersection point of the function $\omega_C$, $C\in \C_j$,
with the functions of all children of $C$ on level $\C_i$. The minimum of these points 
then yields $\alpha_m$. Note that $\alpha_m$ is well-defined as each parent function intersects with at least one child function.
In practice we construct $\alpha_m$
by iterating the list of representatives stored for
$\C_i$. Since $\C_j$ assigns each of these representatives to a cluster, matching the children to their parents can be done in time $O(|\C_i|)$. This already dominates the costs for the remaining steps, which are the computation of the intersection points and the search for the maximum and minimum values.
Recall that the clusters in both clusterings are also mapped to their sizes and costs, which allows to compute an intersection point in constant time.

In the following let $C^{}\in \C_j$ denote a parent that induces $\alpha_m$, i.e., $\lambda_{C^{}} = \alpha_m$.
Furthermore, let  $C^{_1} \in \C_i$ denote a child of $C^{}$ that contains the representative $r(C^{})$ and let $C^{_2}\in \C_i$ denote a child of $C^{}$ with $\omega_{C^{_2}}(\alpha_m) = \omega_{C^{}}(\alpha_m)$. Thus, the intersection point for $C^{_2}$ and $C$ is $\alpha_m$. We denote the intersection point for $C^{_1}$ and $C$, which also exists, by $\alpha^{_1}$. 

\textit{Claim 1: $\alpha_j < \alpha_m \leq \alpha_i$}.
Suppose first $\alpha_j \geq \alpha_m$. This implies $\alpha_j \in [\alpha^{_1},\infty)$, and thus, $C^{_1}$ would dominate $C$ wrt.\ $r(C)$ in $G_{\alpha_j}$ contradicting the fact that $C$ is a cluster in $\C_j$ with representative $r(C)$.
Suppose secondly $\alpha_i < \alpha_m$. This implies $\alpha_i \in [0,\alpha_m)$, and thus, $C$ would dominate $C^{_2}$ in $G_{\alpha_i}$ contradicting the fact that $C^{_2}$ is a cluster in $\C_i$.

After having computed $\alpha_m$ we apply CutC with this newly obtained value. The resulting clustering is denoted by $\C_m$. According to Claim~1 and the hierarchical structure it is $\C_i \leq \C_m \leq \C_j$.

\textit{Claim 2: $\C_m \not= \C_j$}.
Recall that $\alpha^{_1} \leq \alpha_m$. This implies $\alpha_m \in [\alpha^{_1},\infty)$, and thus, $C^{_1}$ dominates $C$ wrt.\ $r(C)$ in $G_{\alpha_m}$.
Nevertheless, $C^{}$ might be a cluster in $\C_m$ wrt.\ to another representative $u \not= r(C^{})$.
However, this can be disproven by the same argument, since the intersection point for $C^{}$  and the child containing $u$ is, analogously to $\alpha^{_1}$, also at most $\alpha_m$. Recall that $\alpha_m$ is the maximum intersection point regarding the children of $C$.
Thus, $\C_j$ contains at least one cluster $C^{} \notin \C_m$.

\textit{Claim 3: If $\C_m = \C_i$ then $\alpha_m$ is the breakpoint between $\C_i$ and $\C_j$}.
We first show that $\alpha_m$ is the breakpoint between $\C_i$ and the next higher level in the complete hierarchy. In a second step we prove that the clustering on the next higher level equals $\C_j$.
To see the first assertion consider $\alpha_m - \varepsilon < \alpha_m$ for $\varepsilon \rightarrow 0$. This implies $\alpha_m - \varepsilon \in [0,\alpha_m)$, and thus, $C$ dominates $C^{_2}$ in $G_{\alpha_m - \varepsilon}$.
Consequently, $\C_m = \C_i$ contains a cluster $C^{_2}$ that will never appear for $\alpha_m - \epsilon$, and thus, $\alpha_m$ is the breakpoint between $\C_i$ and the next higher level in the complete hierarchy.
In order to prove the latter assertion saying that the next higher level equals $\C_j$, we show that each cluster in~$\C_j$ corresponds to a community in $G_{\alpha_m - \varepsilon}$. The nesting property for communities together with the hierarchical structure then ensures that CutC returns $\C_j$ for $\alpha_m-\varepsilon$, which means that there exists no further clustering between $\C_i$ and $\C_j$.
For this final step we overload the notation of $C$ and $C^{_2}$ as follows: Let $C\in \C_j$ denote an arbitrary cluster and let $C^{_2}\in \C_i = \C_m$ denote a child of $C$ with $\omega_{C^{_2}}(\lambda_C) = \omega_C(\lambda_C)$. This is, the intersection point for $C$ and $C^{_2}$ is $\lambda_C \geq \alpha_m$.
Let further $r$ denote the representative of $C^{_2}$ in $\C_m = \C_i$. Recall that $\C_m = \C_i$ does not imply the equivalence of the representative of $C^{_2}$ in $\C_i$ and the representative of $C^{_2}$ in $\C_m$. We show that (a) $\lambda_C = \alpha_m$, and based on this, that (b) $C$ equals the community of $r$ in $G_{\alpha_m - \epsilon}$.

Sub-claim (a): $\lambda_C = \alpha_m$. Suppose $\lambda_{C} > \alpha_m$, which implies $\alpha_m \in [0, \lambda_C)$. Then the parent $C$ would dominate the child $C^{_2}$ in $G_{\alpha_m}$ contradicting the fact that $C^{_2}$ is a cluster in $\C_m$.

Sub-claim (b): $C$ equals the community of $r$ in $G_{\alpha_m - \epsilon}$.
Let $\bar C$ denote the community of $r$ in $G_{\alpha_m - \epsilon}$. 
Then the hierarchical structure implies $C^{_2}\subseteq \bar C \subseteq C$.
It is further $\omega_{C^{_2}}(\alpha_m) \leq \omega_{\bar C}(\alpha_m)$, as otherwise $\bar C$ would induce a smaller $r$-$t$-cut in $G_{\alpha_m}$ than the actual community $C^{_2}$ of $r$. On the other hand, it is $\omega_{\bar C}(\alpha_m-\varepsilon) \leq \omega_{C}(\alpha_m-\varepsilon)$, by the same argument, i.e., otherwise $C$ would induce a smaller $r$-$t$-cut in $G_{\alpha_m -\varepsilon}$ than the actual community $\bar C$ of $r$. 
With the help of (a) we see that $\omega_{C^{_2}}(\alpha_m) = \omega_{C}(\alpha_m)$, and thus, the cost-function of $\bar C$ must lie above $\omega_C$ in $\alpha_m$ and below $\omega_C$ in $\alpha_m-\varepsilon$ (cp.\ Figure~\ref{fig:cutfunc_2}). 
This implies that the slope of $\omega_{\bar C}$ is at least the slope of $\omega_C$.
Now suppose $\bar C \not= C$, which implies $\bar C \subset C$ and thus $|\bar C| < |C|$. The latter, however, means that the slop of $\omega_{\bar C}$ is less than the slope of $\omega_C$ contradicting the previous observation. Hence, it is $\bar C = C$.

\qed
\end{proof}

\begin{wrapfigure}[8]{r}{.3\textwidth}
\vspace{-8ex}
\centering
\includegraphics[width = 2.5cm]{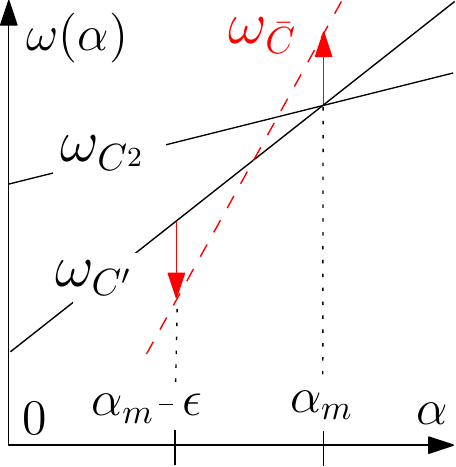}
\vspace{-1.5ex}
\caption{Intersecting cut-cost functions. }
\label{fig:cutfunc_2}
\end{wrapfigure}
Theorem~\ref{theo:main} allows
a recursive search beginning with the computation of $\alpha_m$ for the trivial clusterings $\C_0 < \C_{\max}$ ($\alpha_0 > \alpha_{\max} $).
Recall that in $\C_0$ each vertex is a cluster; $\C_{\max}$ consists of the set of connected components.
After applying CutC for $\alpha_m$ the resulting clustering can be easily compared to the current lower level by counting clusters. If a new clustering was found, the recursion branches and the list storing the levels of the hierarchy is updated. Otherwise, the current branch stops since the breakpoint between two consecutive clusterings has been found.
In contrast to a binary search on the discretized parameter range this approach 
definitely returns a complete hierarchy.

\subsubsection{Experimental Evaluation.}
For our experiments we used real world instances as well as generated instances. Most instances are taken from the testbed of the 10th DIMACS Implementation Challenge~\cite{-dimac-11}, which provides benchmark instances for partitioning and clustering. 
The implementation was realized within the LEMON framework~\cite{-lemon-11}, version~1.2.1. We implemented CutC as described in Algorithm~\ref{alg:CutC},   extended by a heuristic that chooses the vertices in non-increasing order w.r.t.\ the weighted degree. Due to this heuristic, which was proposed by Flake et al., the number of min-cut computations in CutC becomes proportional to the number of clusters in the resulting clustering~\cite{ftt-gcmct-04}.
The min-cut implementation provided by LEMON runs in $O(n^2 \sqrt{m})$. Note that we did not focus on a notably fast implementation. Instead, the implementation should be simple and practical using available routines for sophisticated parts like the min-cut computation.
Table~\ref{tab:runtime1} lists ascending CPU times determined on an AMD Opteron Processor 252 with 2.6 GHz and 16 GB RAM. 

For comparison, we further ran a binary search on the same instances, using the same CutC implementation in the same framework. The running times are listed twice in Table~\ref{tab:runtime1}, once as CPU times and again as factors saying how much longer the binary search ran compared to the parametric search.
However, this is not meant to be a competitive running time experiment, since the running time of the binary search mainly depends on the discretization. We just want to demonstrate that being compelled to choose the discretization intuitively, without any knowledge on the final hierarchy, makes the binary search less practical.
From a users point of view focussing on completeness, we defined the size of the discretization steps as $1/n^2$. The dependency on $n$ is motivated by the fact that the potential number of levels increases with $n$, and by the hope that the breakpoints are distributed more or less equidistantly.
For small graphs with $n<1000$, one can even afford some more running time. Thus, we reduced the step size for those graphs to $1/(1000\; n)$ in further support of completeness. This yields $2^{10}$ to $2^{30}$ discretization steps depending on the length of the parameter range $[\alpha_{\max},\alpha_0]$.
With this discretization the binary search exceeds the parametric search by a factor of four up to~$32$. 

Furthermore, as expected, the running time does not only depend on the input size
but also on the number of different levels in the hierarchy. This can be observed for both approaches comparing the instances as-22july06 and cond-mat. Although the latter is smaller, it takes longer to compute 80 levels compared to only 33 levels in the former graph.

\renewcommand{\arraystretch}{1}
\setlength{\tabcolsep}{1.4mm}
\begin{table}[h]
\centering
\caption{Running times for the parametric search approach (PasS) and the binary search approach (BinS) in minutes and seconds. The factors listed for BinS describe how much longer BinS ran compared to ParS. Instances are sorted by CPU times for ParS. Times longer than six days are marked by *.}

\begin{tabular}{l | c | c | c | r| r| r}
graph & n & m & h & ParS~[m:s] & BinS~[m:s] & BinS~[fac]
\\
\hline
celegans\_metabolic	 & 453 & 2025 & 8 & 0.300 & 7.620 & 8.380\\
celegansneural &	297&	2148& 	17	&0.406 &	8.653	& 9.919Ê\\
netscience&	1589&	2742&	38&	4.310&	4.030&	11.952 \\
power&	4941&	6594&	66&	1:25.736&	8.773&	15.742 \\
as-22july06&	22963	&48436&	33&	39:54.495&	12.419	&20.583 \\
cond-mat&	16726&	47594	&80&	44:15.317&	14.917&	27.425 \\
rgg\_n\_2\_15&	32768&	160240&	46	&245:25.644&	32.748&	22.573\\
G\_n\_pin\_pout&	100000&	501198&	4	&369:29.033&	*&	* \\
cond-mat-2005& 	40421&	175691&	82&	652:32.163&	*	& 21.446 \\
\end{tabular}
\label{tab:runtime1}
\vspace{-2ex}
\end{table}

\newpage
\section{Proof of Theorem~\ref{cor:cutTree} and the Cut Tree Algorithm}\label{app:cutTree}
In this section we show that applying the cut tree algorithm of Gomory and Hu with smallest community cuts as described in Section~\ref{sec:framework} yields a cut tree as stated in Theorem~\ref{cor:cutTree}.

\vspace{2ex}
\rephrase{Theorem}{\ref{cor:cutTree}}{\corCutTree}

\vspace{2ex}
To this end, we briefly review the cut tree algorithm of Gomory and Hu~\cite{gh-mtnf-61} and prove Lemma~\ref{lem:refineInside} and Lemma~\ref{lem:refineOutside}, which together guarantee the correctness of our construction and show how the opposite SCs can be also retained.
Recall that for our special cut tree we choose the following community cuts in line~\ref{alg:gh:minCut}, Algorithm~\ref{alg:gh}:
for a vertex pair $\{s,t\}$ let $(S,V\setminus S)$ denote the community cut inducing $\SC(s,t)$ and
let $(T,V\setminus T)$ denote the community cut inducing $\SC(t,s)$. If $|\SC(s,t)|\leq |\SC(t,s)|$ we choose $(S,V\setminus S)$, and $(T,V\setminus T)$ otherwise. We call the chosen SC a "smallest" SC with respect to $s$ and $t$ and orientate the resulting edge in the intermediate tree such that is points to the SC.
Furthermore, we choose in line~\ref{alg:gh:uv}, Algorithm~\ref{alg:gh}, the last considered vertex in $S$ together with a new vertex in $S$ as $\{u,v\}$.
\newcommand{\lemRefineInside}{
Let $S$ denote a smallest SC with respect to $s \in S$ and $t \in V\setminus S$ 
and~$S'$ a smallest SC with respect to $s$ and another vertex $x\in S$.
Then $S' \subset S$.  If further $s\in S'$, then $S$ is a smallest SC 
with respect to $x$ and $t$ and $\SC(t,s) = \SC(t,x)$.
}
\begin{lemma}
  \label{lem:refineInside}
  \lemRefineInside
\end{lemma}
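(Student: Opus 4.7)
The plan is to derive $S' \subsetneq S$ as an application of Lemma~\ref{lem:intersecBehavior}(2i), with $S_2 := S = \SC(s,t)$ and $S_1 := S'$, where $S'$ is interpreted as $\SC(s,x)$ if $s \in S'$ and as $\SC(x,s)$ otherwise. In both interpretations the source of $S_1$ lies in $S_2$ (either $s \in S$ or $x \in S$), so the only missing hypothesis of (2i) is $t \notin S'$. Establishing $t \notin S'$ is the main obstacle.

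I would prove $t \notin S'$ by contradiction using the posimodular inequality $c(S) + c(S') \geq c(S \setminus S') + c(S' \setminus S)$. Under the assumption $t \in S'$, the set $S' \setminus S$ contains $t$ but neither $s$ nor $x$, so it is an $s$-$t$-cut of cost at least $\lambda(s,t)$; the set $S \setminus S'$ contains the appropriate one of $s, x$ but not the other two, so it is an $s$-$x$-cut of cost at least $\lambda(s,x)$. Because the left-hand side equals $\lambda(s,t) + \lambda(s,x)$, equality is forced everywhere. Tightness turns $S' \setminus S$ into a minimum $s$-$t$-cut with $t$ on that side, yielding $|S'| \geq |S' \setminus S| \geq |\SC(t,s)| \geq |S|$, where the last inequality is the smallest-SC assumption on $S$. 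Tightness likewise turns $S \setminus S'$ into a minimum $s$-$x$-cut; applying the smallest-SC assumption on $S'$ to the $s$-$x$-pair gives $|S \setminus S'| \geq |S'|$, and since $S \cap S'$ is nonempty (it contains $s$ whenever $s \in S'$, and in the other case the $s$-$x$-estimate is already strict via $|\SC(x,s)| < |\SC(s,x)|$), we obtain $|S| > |S'|$, contradicting $|S'| \geq |S|$. Hence $t \notin S'$, Lemma~\ref{lem:intersecBehavior}(2i) delivers $S' \subseteq S$, and strictness is automatic because $x \in S \setminus S'$ or $s \in S \setminus S'$, depending on the case.

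For the additional claim under $s \in S'$, the idea is that $(S, V \setminus S)$ is also a smallest community cut for the pair $\{x, t\}$. I would first establish $\lambda(s,x) > \lambda(s,t)$: since $t \notin S'$, the cut $(S', V \setminus S')$ is itself an $s$-$t$-cut of cost $\lambda(s,x)$, so $\lambda(s,x) \geq \lambda(s,t)$; equality would produce a minimum $s$-$t$-cut with $s$-side $S' \subsetneq S$, contradicting $S = \SC(s,t)$. Next, any $x$-$t$-cut either separates $s$ from $t$ (cost $\geq \lambda(s,t)$) or from $x$ (cost $\geq \lambda(s,x)$), so $\lambda(x,t) \geq \lambda(s,t)$; combined with $\lambda(x,t) \leq c(S) = \lambda(s,t)$ this gives $\lambda(x,t) = \lambda(s,t) < \lambda(s,x)$. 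The strict inequality forbids any minimum $x$-$t$-cut from placing $s$ on the $t$-side (it would otherwise be an $s$-$x$-cut of cost $\lambda(x,t) < \lambda(s,x)$), while every minimum $s$-$t$-cut already places $x$ on the $s$-side, since $x \in S = \SC(s,t)$ lies in every such side. Hence the minimum $x$-$t$-cuts coincide with the minimum $s$-$t$-cuts as cuts of $G$; intersecting $x$-sides then gives $\SC(x,t) = \SC(s,t) = S$, intersecting $t$-sides gives $\SC(t,x) = \SC(t,s)$, and the size comparison $|S| \leq |\SC(t,x)|$ is inherited from $|S| \leq |\SC(t,s)|$.
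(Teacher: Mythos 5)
Your proof is correct, but it takes a genuinely different route from the paper's. For the nesting claim $S'\subset S$, both arguments ultimately reduce to Lemma~\ref{lem:intersecBehavior}(2i), and the real work is ruling out $t\in S'$: the paper does this by bending the two community cuts along each other (Lemma~\ref{lem:shelteredByPrev}) and comparing the resulting sides, whereas you use a single application of posimodularity, $c(S)+c(S')\ge c(S\setminus S')+c(S'\setminus S)$, force both difference sets to be minimum separating cuts, and derive the contradiction $|S|>|S'|\ge|S|$ by cardinality bookkeeping. (A cosmetic remark: in the case $x\in S'$ you do not need strictness of $|\SC(x,s)|<|\SC(s,x)|$; nonemptiness of $S\cap S'$ is immediate there too, since $x\in S$ by hypothesis, so $|S|=|S\setminus S'|+|S\cap S'|>|S'|$ in both cases.) For the second claim the paper appeals to the correctness of the cut tree algorithm (Lemma~\ref{lem:cut_pairs}) to conclude that $S$ is also induced by a minimum $x$-$t$-cut, while you prove from scratch that $\lambda(x,t)=\lambda(s,t)<\lambda(s,x)$ and that the families of minimum $x$-$t$-cuts and minimum $s$-$t$-cuts coincide as cuts of $G$, which gives $\SC(x,t)=\SC(s,t)$ and $\SC(t,x)=\SC(t,s)$ together with the required size comparison. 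Your version is self-contained --- it does not rely on the intermediate-tree context in which the paper applies the lemma --- and it makes explicit why side-minimality transfers from the pair $\{s,t\}$ to $\{x,t\}$, a point the paper glosses over; the paper's version is shorter because it can lean on the Gomory--Hu/Gusfield machinery it has already set up for the construction.
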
 

\begin{figure}[h]
\centering
	\subfigure[$x\in S'$]{
		\label{fig:cutPairsNew_a}
		\includegraphics[width = 3.5cm, page=1]{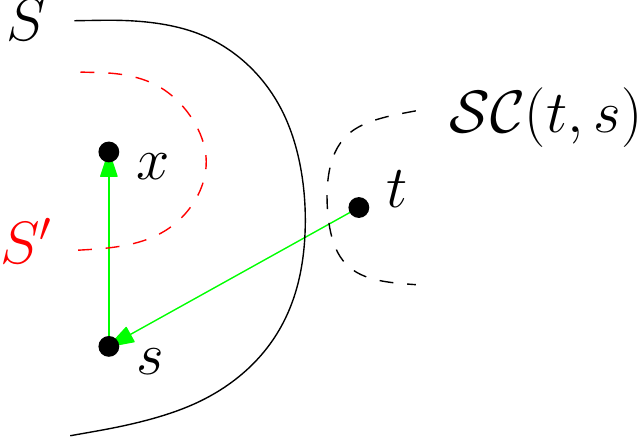}
	}
\hspace{1ex}
	\subfigure[$s\in S'$]{
		\label{fig:cutPairsNew_b}
		\includegraphics[width = 3.5cm, page=2]{refineInside.pdf}
	}
	\vspace*{-2ex}
	\caption{Situation in Lemma~\ref{lem:refineInside}.}
	\label{fig:refineInside}
\vspace{-0ex}
\end{figure}

\newcommand{\lemRefineOutside}{
Let $U$ denote a smallest SC with respect to $u \in U$ and $s\in V\setminus U$ 
and let $S'$ denote a smallest SC with respect to $s$ and another vertex $x\in V\setminus U$.
Then $U \subset S'$ or $U\cap S' = \emptyset$.

If $s\in S'$ and $U\cap S' = \emptyset$, then $U$ is also a smallest SC with respect to~$u$ and~$x$ and
if (i) $x\in \SC(s,u)$, $\SC(s,u) = \SC(x,s)$ and otherwise
(ii) $\SC(s,u) = S'$ and $\SC(x,u) = \SC(x,s)$ if also $u\notin SC(x,s)$.

If $x\in S'$ and $U\subset S'$, then $U$ is also a smallest SC with respect to~$u$ and~$x$ and
if (i) $x\in \SC(s,u)$, $\SC(s,u) = \SC(x,u)$ and otherwise 
(ii) $\SC(s,u) = \SC(s,x)$, but no assertion on $\SC(x,u)$, which is the missing opposite SC of the edge $(x,u)$.
}

\begin{lemma}
  \label{lem:refineOutside}
  \lemRefineOutside
\end{lemma}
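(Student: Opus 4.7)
The plan is to reduce every claim to the intersection analysis of Lemma~\ref{lem:intersecBehavior}, combined with the minimum-side characterisation of community cuts in Lemma~\ref{lem:charSC} and the nesting fact of Lemma~\ref{lem:comNest}. Throughout, $U=\SC(u,s)$ (since $u\in U$) and, depending on which of $s,x$ lies in $S'$, either $S'=\SC(s,x)$ or $S'=\SC(x,s)$. The argument then splits into the two outer cases plus their further sub-cases, each of which amounts to setting up Lemma~\ref{lem:intersecBehavior} with the correct sources, opponents and side memberships.

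First I would dispatch the dichotomy $U\subset S'$ or $U\cap S'=\emptyset$. In the case $s\in S'$, apply Lemma~\ref{lem:intersecBehavior} with $(s_1,T_1)=(u,s)$, $S_1=U$ and $(s_2,T_2)=(s,x)$, $S_2=S'$. If $u\notin S'$, then $\{u,s\}\cap(U\cap S')=\emptyset$ because $s\notin U$ and $u\notin S'$, so Case~(1) delivers $U\cap S'=\emptyset$. If $u\in S'$, then $T_2\cap S_1=\{x\}\cap U=\emptyset$ and $s_1=u\in S_2$, so Case~(2i) delivers $U\subseteq S'$, with strict inclusion since $s\in S'\setminus U$. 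The case $x\in S'$, so $S'=\SC(x,s)$, is symmetric with $(s_2,T_2)=(x,s)$ and produces the same dichotomy.

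Next I would establish, in each asserted sub-case, that $U$ is a smallest SC with respect to $u$ and $x$. Lemma~\ref{lem:comNest} applied to $U$ with opponent $\{x\}$ (disjoint from $U$) immediately gives $U\subseteq\SC(u,x)$. To upgrade this to $U=\SC(u,x)$ and to the size bound $|U|\le|\SC(x,u)|$, I would chain the community cut of $U$ with $(S',V\setminus S')$: any hypothetical $u$-$x$-cut cheaper than, or equal-cost but with a strictly smaller $u$-side than, $(U,V\setminus U)$ can be combined with $(S',V\setminus S')$ (which in the first sub-case separates $s$ from $\{u,x\}$ and in the second strictly contains $U$) to produce a $u$-$s$-cut violating the minimality or the minimum-side property of $U$ guaranteed by Lemma~\ref{lem:charSC}. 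The identifications of the auxiliary SCs in parts (i) and (ii) are then further direct applications of Lemma~\ref{lem:intersecBehavior}(2): I would compare the SC in question (such as $\SC(s,u)$) with the candidate SC on the right-hand side, check that the hypothesis $x\in\SC(s,u)$ or its negation places the pair exactly into Case~(2i) or (2ii), and invoke the minimum-side uniqueness from Lemma~\ref{lem:charSC} to force equality whenever two of these SCs are minimum cuts of the same pair with coinciding minimiser side.

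The main obstacle is the bookkeeping: for the triple $\{u,s,x\}$ there are up to six directed SCs, and several of the asserted identities equate SCs with different sources, so each sub-branch demands setting up Lemma~\ref{lem:intersecBehavior} with the correct source, opponent and vertex-side pattern. The recurring technical move is recognising that two minimum $a$-$b$-cuts of equal cost whose $a$-sides minimise under inclusion must coincide, and keeping this role assignment straight across all eight sub-branches of the lemma---rather than any individual chain of inequalities---is where essentially all the care lies.
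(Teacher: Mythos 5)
Your plan follows the same skeleton as the paper's proof: the dichotomy via Lemma~\ref{lem:intersecBehavior}(1)/(2i), then establishing $U=\SC(u,x)$, then the sub-case identifications via Lemma~\ref{lem:intersecBehavior}(2) and the minimal-side uniqueness of community cuts. The genuine difference lies in the pivotal middle step. The paper gets ``$U$ induces a minimum $u$-$x$-cut'' for free from the cut-tree machinery (Lemma~\ref{lem:cut_pairs}, since $U$ and $S'$ arise as consecutive split cuts and the edge for $U$ is reconnected to the node containing $x$), and in Case~2 it uses Gusfield's bending lemma (Lemma~\ref{lem:shelteredByPrev}) to place $s$ inside $\SC(x,u)$; you instead derive $U\subseteq\SC(u,x)$ from Lemma~\ref{lem:comNest} and argue minimality by uncrossing with $S'$. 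Your route has the advantage of making the lemma self-contained as a statement about SCs, independent of the algorithmic context.

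Two points need tightening. First, your uncrossing argument as stated only covers a hypothetical better $u$-$x$-cut $(A,V\setminus A)$, $u\in A$, with $s\notin A$ (which is then directly a too-cheap $u$-$s$-cut). When $s\in A$ the cut does not separate $u$ from $s$ and no combination with $S'$ produces a $u$-$s$-cut on the cheap side; the correct observation is that such a cut separates $s$ from $x$ and hence costs at least $\lambda(s,x)\geq\lambda(s,u)$, where the last inequality holds because $S'$ separates $s$ from $u$ in both outer cases. (Also note that once $(U,V\setminus U)$ is known to be a minimum $u$-$x$-cut, the ``strictly smaller $u$-side'' branch is vacuous by Lemma~\ref{lem:comNest}, so you need not rule it out separately.) Second, the bound $|U|\leq|\SC(x,u)|$, which is what ``smallest SC with respect to $u$ and $x$'' requires, does not follow from the cut-chaining: it is obtained only after the sub-case identifications, e.g.\ from $\SC(x,u)=\SC(s,u)$ together with $|\SC(s,u)|\geq|\SC(u,s)|=|U|$ in Case~1(i), or from $|\SC(x,u)|\geq|S'|\geq|U|$ in Case~1(ii). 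So the identifications are not merely ``further'' consequences; they are load-bearing for the smallest-SC claim, and the order of your argument should reflect that.
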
 
\begin{figure}[h]
\centering
	\subfigure[]{
		\label{fig:cutPairsNew_c}
		\includegraphics[width = 2.5cm, page=1]{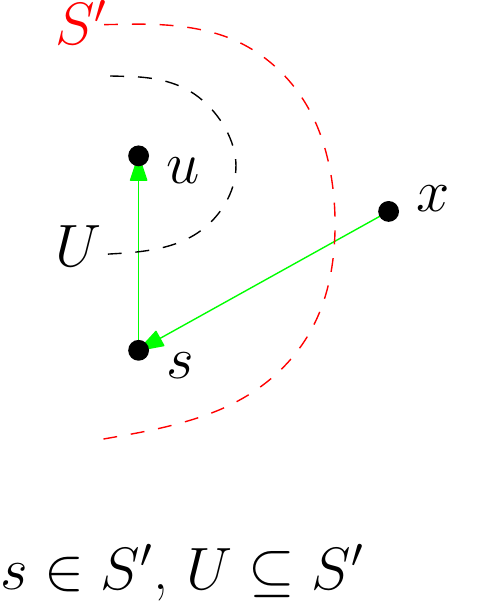}
	}
\hspace{1ex}
	\subfigure[]{
		\label{fig:cutPairsNew_d}
		\includegraphics[width = 2.5cm, page=2]{refineOutside.pdf}
	}
\hspace{1ex}
	\subfigure[]{
		\label{fig:cutPairsNew_e}
		\includegraphics[width = 2.5cm, page=3]{refineOutside.pdf}
	}
\hspace{1ex}
	\subfigure[]{
		\label{fig:cutPairsNew_f}
		\includegraphics[width = 2.5cm, page=4]{refineOutside.pdf}
	}
	\vspace*{-2ex}
	\caption{Situation in Lemma~\ref{lem:refineOutside}.}
	\label{fig:refineOutside}
\vspace{-0ex}
\end{figure}

\subsubsection{Reviewing the Cut Tree Algorithm.}
%
We briefly revisit the construction of a cut tree~\cite{gh-mtnf-61,g-vsmap-90}. 
This algorithm iteratively constructs $n-1$ non-crossing minimum separating cuts for $n-1$ vertex pairs, which we call \emph{step pairs}. These pairs are chosen arbitrarily from the set of pairs not separated by any of the cuts constructed so far. 
Algorithm~\ref{alg:gh} briefly describes the cut tree algorithm of Gomory and Hu.
\begin{algorithm2e}[h]
	\caption{\algo{Cut Tree}}\label{alg:gh}
	\KwIn{Graph $G=(V,E,c)$}
	\KwOut{Cut tree of $G$}
	Initialize tree $\Ts := (\Vs,\Es,\css)$ with $\Vs \gets \{V\}, \Es \gets \emptyset$ and $\css$ empty\nllabel{alg:gh:init}\\ 
	\While({\myco*[f]{unfold all nodes}}){$\exists S \in \Vs$ with $|S| > 1$}{
		$\{u,v\} \gets$ arbitrary pair from $\binom{S}{2}$ \nllabel{alg:gh:uv}\\
		\lForAll{$S_j$ adjacent to $S$ in $\Ts$}{$N_j \gets$ subtree of $S$ in $\Ts$ with $S_j \in N_j$} \nllabel{alg:gh:subtrees}\\
		$G_S = (V_S,E_S,c_S) \gets$ in $G$ contract each $N_j$ to $[N_j]$ \myco*[f]{contraction} \nllabel{alg:gh:contract}\\
 		$(U,V \setminus U) \gets$  min-$u$-$v$-cut in $G_S$, cost $\lambda(u,v)$, $u \in U$ \nllabel{alg:gh:minCut}\\
		$S_u \gets S \cap U$ and $S_v \gets S \cap (V_S \setminus U)$ \myco*[f]{split $S = S_u \cup S_v$}\\
		$\Vs \gets (\Vs \setminus \{S\}) \cup \{S_u,S_v\}$, $\Es \gets \Es \cup \{\{S_u,S_v\}\}$, $\css(S_u,S_v) \gets \lambda(u,v)$\\
		\ForAll{former edges $e_j = \{S,S_j\} \in \Es$
			\nllabel{alg:gh:reconnectStart}}{
			\lIf({\myco*[f]{reconnect $S_j$ to $S_u$}}){$[N_j] \in U$}{
				$e_{j} \gets \{S_u,S_j\}$
			}
			\lElse({\myco*[f]{reconnect $S_j$ to $S_v$}}){
				$e_j \gets \{S_v,S_j\}$ \nllabel{alg:gh:reconnectEnd}
			}
		}
	}
	\Return $\Ts$
\end{algorithm2e}

The \emph{intermediate} cut tree $\Ts = (\Vs,\Es,\css)$
is initialized as an isolated, edgeless node containing all original vertices. 
Then, until each node of~$\Ts$ is a singleton node, a node $S \in V_*$ is \emph{split}.
\begin{figure}[tb]
\centering
	\subfigure[If $x \in S_{u}$, $\{x,y\}$ is still a cut pair of $\{S_{u}, S_j\}$]{
		\label{fig:cutPairs_a}
		\includegraphics[width = 5.6cm, page=1]{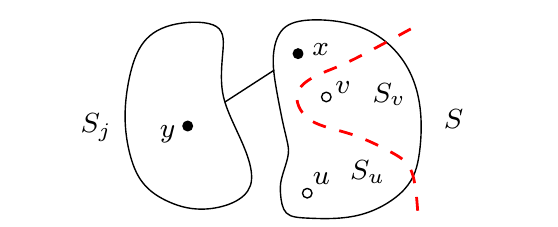}
	}
\hspace{1ex}
	\subfigure[If $x\notin S_{u}$, $\{u,y\}$ is a cut pair of  $\{S_u,S_j\}$]{
		\label{fig:cutPairs_b}
		\includegraphics[width = 5.6cm, page=2]{cutPairs.pdf}
	}
	\vspace*{-2ex}
	\caption{Situation in Lemma~\ref{lem:cut_pairs}. There always exists a cut pair of the edge $\{S_{u}, S_j\}$ in the nodes incident to the edge, independent of the shape of the split cut (dashed).}
	\label{fig:cutPairs}
\vspace{-0ex}
\end{figure}
To this end, nodes $S' \neq S$ are dealt with by contracting in $G$ whole subtrees $N_j$ of $S$ in $\Ts$, connected to $S$ via edges $\{S,S_j\}$, to single nodes $[N_j]$
before cutting, which yields $G_S$.
The split of $S$ into $S_u$ and $S_v$ is then defined by a minimum $u$-$v$-cut (\emph{split cut}) in $G_S$,
which does not cross any of the previously used cuts due to the contraction technique. 

\begin{wrapfigure}[13]{r}{.35\textwidth}
\vspace{-6.5ex}
	\includegraphics[width = 4cm]{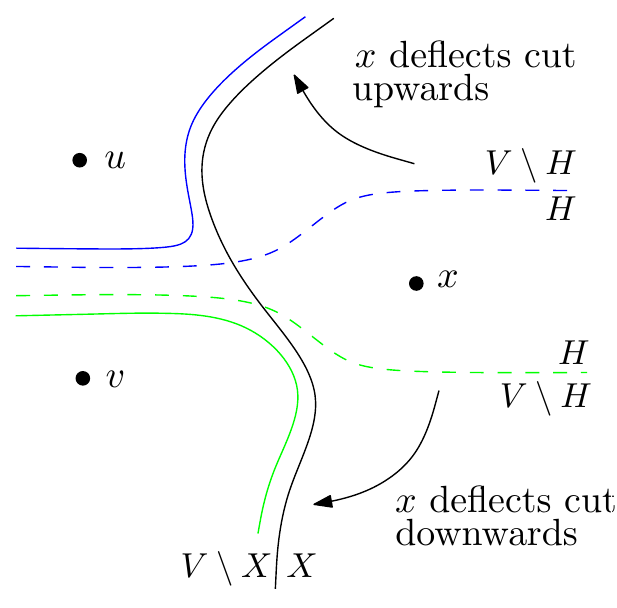}
	\caption{\small Depending on $x$ Lem.~\ref{lem:shelteredByPrev} bends the cut $(H, V \setminus H)$ upwards or downwards.
}
	\label{fig:shelteredByPrev}
\vspace{-3ex}
\end{wrapfigure}
Afterwards, each $N_j$ is reconnected, again by $S_j$,
to either $S_u$ or $S_v$ depending on which side of the cut $[N_j]$ ended up.
Note that this cut in $G_S$ can be proven to induce a minimum $u$-$v$-cut in $G$.
The correctness of \textsc{Cut Tree} is guaranteed by Lemma~\ref{lem:cut_pairs}, 
which takes care for the \emph{cut pairs} of the reconnected edges.
It states that each edge $\{S,S'\}$ in~$\Ts$ has a cut pair $\{x,y\}$ with $x\in S$, $y\in S'$. An intermediate cut tree satisfying this condition is \emph{valid}. The assertion is not obvious, since the nodes incident to the edges in~$T_*$ change whenever the edges are reconnected. Nevertheless, each edge in the final cut tree represents a minimum separating cut of its incident vertices, due to Lemma~\ref{lem:cut_pairs}.
The lemma was formulated and proven in~\cite{gh-mtnf-61} and rephrased in~\cite{g-vsmap-90}. See Figure~\ref{fig:cutPairs}.

\begin{lemma}[Gus.~\cite{g-vsmap-90}, Lem.~4] \label{lem:cut_pairs} 
	Let $\{S,S_j\}$ be an edge in $\Ts$ inducing a cut with cut pair $\{x,y\}$, w.l.o.g.\ $x\in S$.
	Consider step pair $\{u,v\} \subseteq S$ that splits $S$ into $S_u$ and $S_v$, w.l.o.g.\ $S_j$ and $S_u$ ending up on the same cut side, i.e.\ $\{S_{u}, S_j\}$ becomes a new edge in $\Ts$.
	If $x\in S_{u}$, $\{x,y\}$ remains a cut pair for $\{S_{u}, S_j\}$.
	If $x\in S_{v}$, $\{u,y\}$ is also a cut pair of $\{S_{u}, S_j\}$.
\end{lemma}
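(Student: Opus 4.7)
The plan is to observe, first, that reconnection does not change the partition of $V$ induced by the tree edge, so Case~1 is immediate; Case~2 then reduces to showing $\lambda(u,y)\geq\lambda(x,y)$, which I prove by submodular uncrossing against the split cut itself.

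Structural setup and Case~1: Let $V_{N_j}$ denote the union of vertices appearing in the nodes of the subtree $N_j$ (which contains $S_j$, hence $y$). Removing $\{S,S_j\}$ from $\Ts$ disconnects $\Ts$ into two components with vertex sets $V_{N_j}$ and $V\setminus V_{N_j}$; in particular $u,v,x\in S\subseteq V\setminus V_{N_j}$. After the split, $S_j$ is reattached to $S_u$, the new edge $\{S_u,S_v\}$ is inserted, and every other subtree $N_{j'}$ is reattached to $S_u$ or $S_v$; none of these reattachments enters $N_j$. Hence removing $\{S_u,S_j\}$ from the updated tree yields the very same partition $(V_{N_j},V\setminus V_{N_j})$, so the corresponding cut in $G$ is literally unchanged, of cost $\lambda(x,y)$. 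If $x\in S_u$ (Case~1), then $x$ and $y$ still sit on opposite sides of this unchanged cut of cost $\lambda(x,y)$, so $\{x,y\}$ remains a cut pair.

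Case~2 ($x\in S_v$): The new cut separates $u$ and $y$ at cost $\lambda(x,y)$, so $\lambda(u,y)\leq\lambda(x,y)$; I must prove the reverse. Let $(U',V\setminus U')$ be the split cut realised in $G$, with $u\in U'$, $v\notin U'$ and cost $\mu:=\lambda(u,v)$. By the reconnection rule the subtree $N_j$ landed on the $u$-side, so $V_{N_j}\subseteq U'$ and in particular $y\in U'$; meanwhile $S_v\subseteq V\setminus U'$, so $x\in V\setminus U'$. Thus the split cut itself is an $x$-$y$-cut, forcing $\mu\geq\lambda(x,y)$. Now take any minimum $u$-$y$-cut $(P,V\setminus P)$, $u\in P$, $y\notin P$. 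If $x\in P$ then $(P,V\setminus P)$ is an $x$-$y$-cut and $\lambda(u,y)\geq\lambda(x,y)$ immediately. Otherwise $x\notin P$ and $x\notin U'$, so $P\cap U'$ contains $u$ but not $v$ (a $u$-$v$-cut, cost at least $\mu$) while $P\cup U'$ contains $y$ (via $U'$) but not $x$ (an $x$-$y$-cut, cost at least $\lambda(x,y)$). Writing $\overline X:=V\setminus X$, submodularity of the cut function yields
\[
\lambda(u,y)+\mu \;=\; c(P,\overline P)+c(U',\overline{U'}) \;\geq\; c(P\cap U',\overline{P\cap U'})+c(P\cup U',\overline{P\cup U'}) \;\geq\; \mu+\lambda(x,y),
\]
whence $\lambda(u,y)\geq\lambda(x,y)$, and $\{u,y\}$ is a cut pair of $\{S_u,S_j\}$.

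The main obstacle is the structural bookkeeping that identifies the split cut as an $x$-$y$-cut: one has to track that $S_j$, and therefore all of $V_{N_j}$ (including $y$), lands on the $u$-side of the split cut in $G$, while $x\in S_v$ is pushed onto the $v$-side. Once this placement is pinned down, the inequality $\mu\geq\lambda(x,y)$ is immediate and the submodular uncrossing with an arbitrary minimum $u$-$y$-cut is a routine calculation.
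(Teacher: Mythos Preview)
The paper does not actually prove Lemma~\ref{lem:cut_pairs}; it merely states it and attributes the proof to Gomory and Hu~\cite{gh-mtnf-61} and Gusfield~\cite{g-vsmap-90}. So there is no in-paper proof to compare against, and the question is only whether your argument is correct---which it is.

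Your structural observation that removing $\{S_u,S_j\}$ from the updated tree yields exactly the same bipartition $(V_{N_j},V\setminus V_{N_j})$ of $V$ as removing $\{S,S_j\}$ did before is right and makes Case~1 immediate. For Case~2 your submodular uncrossing is a clean way to obtain $\lambda(u,y)\geq\lambda(x,y)$: the key ingredients are (i) the lifted split cut $(U',V\setminus U')$ is a genuine minimum $u$-$v$-cut in $G$ (the paper explicitly notes this just above the lemma), and (ii) because $[N_j]$ lands on the $u$-side, $V_{N_j}\subseteq U'$, so $y\in U'$ while $x\in S_v\subseteq V\setminus U'$, making the split cut an $x$-$y$-cut and giving $\mu\geq\lambda(x,y)$. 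With those facts in hand, the submodularity inequality with an arbitrary minimum $u$-$y$-cut $P$ is routine and your case split on $x\in P$ versus $x\notin P$ is handled correctly. This is essentially the classical Gomory--Hu/Gusfield argument; Gusfield phrases it via his ``bending'' Lemma~1 (the paper's Lemma~\ref{lem:shelteredByPrev}) rather than raw submodularity, but the two are equivalent here.
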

While Gomory and Hu use contractions in $G$ to prevent crossings of the cuts, as a simplification, Gusfield introduced the following lemma showing that contractions are not necessary, since any arbitrary minimum separating cut can be bent along the previous cuts resolving any potential crossings. See~Figure~\ref{fig:shelteredByPrev}.
\begin{lemma}[Gus.~\cite{g-vsmap-90}, Lem.~1]\label{lem:shelteredByPrev}
	Let $(X,V \setminus X)$ be a minimum $x$-$y$-cut in $G$, with $x \in X$.
	Let $(H, V \setminus H)$ be a minimum $u$-$v$-cut, with $u,v \in V \setminus X$ and $x \in H$.
	Then the cut $(H \cup X, (V \setminus H)\cap(V \setminus X))$ is also a minimum $u$-$v$-cut.
\end{lemma}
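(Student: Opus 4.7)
The plan is to prove Lemma~\ref{lem:shelteredByPrev} by a standard submodular uncrossing argument on the cut function $c(\,\cdot\,, V\setminus\,\cdot\,)$, which satisfies
\[
c(A\cup B, V\setminus(A\cup B)) + c(A\cap B, V\setminus(A\cap B)) \;\leq\; c(A, V\setminus A) + c(B, V\setminus B)
\]
for all $A,B\subseteq V$. This posimodular inequality is the only structural fact about cuts that I intend to invoke.

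First I would verify that $(H\cup X,\, V\setminus(H\cup X))$ is indeed a $u$-$v$-cut. Since $u,v\in V\setminus X$ by assumption, neither of them lies in $X$, so their $H$-status is unchanged by the union with $X$: after relabeling if necessary we may take $u\in H$ and $v\in V\setminus H$, which yields $u\in H\cup X$ and $v\in (V\setminus H)\cap(V\setminus X)=V\setminus(H\cup X)$. Hence the proposed cut does separate $u$ and $v$, and so its cost is at least $\lambda(u,v)=c(H,V\setminus H)$.

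The heart of the argument is the complementary inequality. Instantiating submodularity with $A:=H$ and $B:=X$ gives
\[
c(H\cup X, V\setminus(H\cup X)) + c(H\cap X, V\setminus(H\cap X)) \;\leq\; c(H, V\setminus H) + c(X, V\setminus X).
\]
Now observe that $(H\cap X,\,V\setminus(H\cap X))$ is an $x$-$y$-cut: $x\in H\cap X$ by hypothesis, while $y\in V\setminus X\subseteq V\setminus(H\cap X)$. Since $(X,V\setminus X)$ is a \emph{minimum} $x$-$y$-cut, we conclude
\[
c(H\cap X, V\setminus(H\cap X)) \;\geq\; c(X, V\setminus X),
\]
and substituting into the submodular inequality yields $c(H\cup X, V\setminus(H\cup X))\leq c(H,V\setminus H)=\lambda(u,v)$. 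Combined with the lower bound from the previous paragraph, equality holds, so $(H\cup X,V\setminus(H\cup X))$ is also a minimum $u$-$v$-cut, as claimed.

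The only genuinely delicate point is the side-assignment at the start: one must use $u,v\notin X$ to ensure that taking the union with $X$ does not accidentally merge $u$ and $v$ onto the same side, and one must use $x\in H$ to ensure that $H\cap X$ is nonempty and contains $x$, so that it is a legitimate $x$-$y$-cut (allowing the minimality of $X$ to be invoked). Once these membership bookkeeping checks are made, the submodular inequality finishes the proof in one line, and no further case analysis is needed.
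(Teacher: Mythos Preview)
Your proof is correct and is the standard submodularity (uncrossing) argument for this fact. Note, however, that the paper does not supply its own proof of this lemma at all: it is quoted verbatim as Lemma~1 from Gusfield~\cite{g-vsmap-90} and used as a black box, so there is no in-paper argument to compare against. One small terminological slip: the inequality you display is \emph{submodularity} of the cut function, not posimodularity (the latter would be $f(A)+f(B)\ge f(A\setminus B)+f(B\setminus A)$); your text already says ``submodular uncrossing,'' so just drop the word ``posimodular'' to keep it consistent.
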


\subsubsection{Proof of Lemma~\ref{lem:refineInside} and Lemma~\ref{lem:refineOutside}.}
In the following proofs, whenever we bend a 
cut along another cut deflected by a vertex,
we apply Lemma~\ref{lem:shelteredByPrev}.

\vspace{2ex}
\rephrase{Lemma}{\ref{lem:refineInside}}{\lemRefineInside}

\begin{proof}
We distinguish two cases, namely $x\in S'$ (Figure~\ref{fig:cutPairsNew_a}) 
and $s \in S'$ (Figure~\ref{fig:cutPairsNew_b}).
The cases are (almost) symmetric with respect to the
first assertion. Hence we prove the first
assertion, which is $S'\subset S$, just for the first one. 
Nevertheless, we need to distinguish the second case,
since here the edge $(t,s)$ is reconnected and we need 
to further show that the SCs remain valid. 

\emph{Case 1: $x \in S'$.} 
If $t\notin S'$, it is $S' \subset S$ according to 
Lemma~\ref{lem:intersecBehavior}(2i).

Now suppose $t\in S'$ and consider $\SC(s,x)$. 
Note, that $t \notin \SC(s,x)$, since $\SC(s,x)\cap S' = \emptyset$. 
This is,  $\SC(s,x)\subset S$ due to bending the 
corresponding cut along~$S$ deflected by~$t$. 
We show that $|\SC(s,x)|<|S'|$ and hence, 
$S'$ would not be a smallest SC with 
respect to $s$ and $x$, which is a contradiction. 
Hence, the case $t'\in S'$ does not occur.

Let $\theta$ denote the cut inducing $S$ and $\theta'$ 
the one inducing $S'$. 
We observe that $\theta'$ could be bent along $S$ deflected by $t$, 
and thus, $\theta$ is a minimum $x$-$t$-cut according 
to the correctness of the cut tree algorithm (Lemma~\ref{lem:cut_pairs}).
Hence, $\theta$ could be bent along (the original) $\theta'$ deflected by $s$ yielding a 
cut side $T\ni t$ of a minimum $s$-$t$-cut. 
Since $S$ is a smallest SC with respect to $s$ and $t$,
it follows $|S| \leq |T|$, while $T\subset S'$, $\SC(s,x)\subset S$. 
Finally it is $|\SC(s,x)| < |S|\leq |T| < |S'|$.

\emph{Case 2: $s \in S'$.}
Now we know that $S'\subset S$. We show next that in Case~2 
$S$ is also a smallest SC with respect to $x$ and $t$ and $\SC(t,s) = \SC(t,x)$.

According to the correctness of the cut tree algorithm (Lemma~\ref{lem:cut_pairs}), 
$S$ is also induced by a minimum $x$-$t$-cut, which is $\lambda(x,t) = \lambda(s,t)$ and $S = \SC(x,t)$. 
Hence, it is further $\SC(t,s) = \SC(t,x)$, as $x\in S$,
and together with $S = \SC(x,t)$ (see above),
$S$ is a smallest SC with respect to $x$ and $u$, since
$|\SC(t,s)|\geq |S|$.
\qed
\end{proof}

\vspace{2ex}
\rephrase{Lemma}{\ref{lem:refineOutside}}{\lemRefineOutside}

\begin{proof}
We distinguish two cases, namely $s \in S'$ (Figure~\ref{fig:cutPairsNew_c},\ref{fig:cutPairsNew_d})
and $x\in S'$ (Figure~\ref{fig:cutPairsNew_e},\ref{fig:cutPairsNew_f}).
The first assertion, which is $U\cap S' = \emptyset$ or $U \subset S'$,
follows in both cases directly from Lemma~\ref{lem:intersecBehavior}(1),(2i), 
depending on whether $u\in S'$.
In the following we proof the further assertions.

\emph{Case 1:} $s\in S'$.
If $U\cap S' = \emptyset$, due to the correctness of the cut tree 
algorithm (Lemma~\ref{lem:cut_pairs}),
$U$ induces a minimum $u$-$x$-cut, and hence,
it is $\lambda(x,u) = \lambda(s,u)\leq \lambda(x,s)$ and $U = \SC(u,x)$.

(i) $x\in \SC(s,u)$: Suppose $s\notin \SC(x,u)$. 
Then it follows $\lambda(x,s) = \lambda(x,u)$
and $S' \subset \SC(s,u)$ according to Lemma~\ref{fig:intersecBehavior}(2i).
Hence, $S'$ would be a smaller SC with respect to $s$ and $u$ than 
$\SC(s,u)$, which is a contradiction.
Thus, it is $s\in \SC(x,u)$ and by Lemma~\ref{fig:intersecBehavior} (2ii),
it is $\SC(x,u) = \SC(s,u)$ and together with $U = \SC(u,x)$ (see above), 
$U$ is a smallest SC with respect to $u$ and $x$,
since $|\SC(s,u)| \geq |U|$.

(ii) $x\notin \SC(s,u)$: It follows $\lambda(x,s) = \lambda(s,u)$
and $\SC(s,u) = S'$ by Lemma~\ref{fig:intersecBehavior}(2ii).
If further $u\notin \SC(x,s)$, by Lemma~\ref{lem:intersecBehavior}(2ii),
we see that $\SC(x,s) = \SC(x,u)$.
It is further $|\SC(x,u)| \geq |S'| \geq |U|$, and thus, 
together with $U = \SC(u,x)$ (see above),
$U$ is a smallest SC with respect to $u$ and $x$.

\emph{Case 2:} $x\in S'$.
If $U\subset S'$, due to the correctness of the cut tree algorithm (Lemma~\ref{lem:cut_pairs}),
$U$ induces a minimum $u$-$x$-cut, and hence,
it is $\lambda(x,u) = \lambda(s,u)\leq \lambda(x,s)$ and $U = \SC(u,x)$.

We claim that $s \in \SC(x,u)$, which helps to prove (i) and (ii).
Suppose $s\notin \SC(x,u)$. Then it is $\lambda(x,s) = \lambda(x,u)$ and 
the cut inducing $\SC(x,u)$ can be bent along $S'$ deflected by $s$, 
such that $\SC(x,u) \subset S'$, and hence, induces
a smaller SC with respect to $s$ and $x$.

(i) $x\in\SC(s,u)$: Since also $s\in \SC(x,u)$,
it is $\SC(x,u) =\SC(s,u)$, according to Lemma~\ref{lem:intersecBehavior}(2ii).
Furthermore, together with $U = \SC(u,x)$ (see above), $U$ is a smallest SC with respect to $x$ and $u$, since $\SC|(s,u)|\geq |U|$.

(ii) $x \notin \SC(s,u)$: Hence, $\lambda(s,x) = \lambda(s,u)$, and thus, $\SC(s,u) = \SC(s,x)$.
With $s\in \SC(x,u)$ and $x\notin \SC(s,u)$, by Lemma~\ref{lem:intersecBehavior}(2i)
it follows $\SC(s,u)\subset \SC(x,u)$.
Hence, $|U| \leq \SC(s,u) < |\SC(x,u)|$ and together with $U = \SC(u,x)$ (see above) ,
$U$ is a smallest SC with respect to $u$ and $x$.
\qed
\end{proof}

\putbib
\end{bibunit}

\end{appendix}

\end{document}